\documentclass[a4paper, 11pt]{article}

\usepackage{a4wide}
\usepackage[utf8]{inputenc}
\usepackage[english]{babel} 
\usepackage{amsmath}
\usepackage{amsfonts}
\usepackage{amssymb}
\usepackage{amsthm}
\usepackage{amscd}
\usepackage{euscript}
\usepackage{tikz}
\usepackage{url}
\usetikzlibrary{topaths,calc}
\usepackage{algorithm,algorithmic}
\usepackage{subfig}

\newtheorem{theo}{Theorem}[section]
\newtheorem{defi}[theo]{Definition}
\newtheorem{prop}[theo]{Proposition}

\newtheorem{lemma}[theo]{Lemma}

\DeclareMathOperator{\rank}{rank}
\DeclareMathOperator{\Ker}{Ker}
\DeclareMathOperator{\im}{Im}

\DeclareMathOperator{\Mat}{Mat}

\newcommand{\C}{\mathbb{C}}

\newcommand{\Z}{\mathbb{Z}}

\newcommand{\F}{\mathbb{F}}

\newcommand{\HH}{\mathbf H}

\newcommand\mc[1]{\mathcal{#1}}

\newcommand{\dha}{\partial_{1}^{\mathcal H}}
\newcommand{\dhb}{\partial_{2}^{\mathcal H}}

\title{Decoding color codes by projection onto surface codes}

\author{
    Nicolas Delfosse\\
    delfosse@lix.polytechnique.fr\\
    LIX \& INRIA Saclay - \'Ecole Polytechnique, 91128 Palaiseau, France\\
}

\begin{document}

\maketitle

\begin{abstract}
We propose a new strategy to decode color codes, which is based on the projection of the error onto three surface codes. This provides a method to transform every decoding algorithm of surface codes into a decoding algorithm of color codes. Applying this idea to a family of hexagonal color codes, with the perfect matching decoding algorithm for the three corresponding surface codes, we find a phase error threshold of approximately $8.7\%$.
Finally, our approach enables us to establish a general lower bound on the error threshold of a family of color codes depending on the threshold of the three corresponding surface codes.
These results are based on a chain complex interpretation of surface codes and color codes.
\end{abstract}

\section{Introduction}

Classical Low Density Parity-Check (LDPC) codes are of high practical interest for classical error correction because they are both high rate and endowed with an efficient decoding algorithm. It is therefore natural to investigate their quantum generalizations \cite{MMM04}.
Topological constructions of quantum codes lead to some of the most promising families of quantum LDPC codes including Kitaev's toric code \cite{Ki97}, surface codes \cite{BM06, Ze09, De13}, color codes \cite{BM07a} and other geometrical constructions \cite{CDZ12, TZ09, KP13, FH13, Au13}.
Besides their appeal for quantum error-correction, topological codes also exhibit some interesting features for fault tolerant quantum computing, such as the property that some gates can be implemented topologically. In this paper, we are interested in decoding algorithms for topological codes, which is one essential ingredient of fault tolerance to avoid error accumulation during a computation.

The most natural way to decode quantum LDPC codes is to adapt classical strategies \cite{PC08}. Unfortunately, the unavoidable presence of short cycles in the Tanner graph of CSS codes makes this iterative decoding algorithm difficult to adapt to the quantum setting.
We focus here on the decoding problem for two families of quantum LDPC codes based on tilings of surfaces: surface codes and color codes. In this special case, different strategies making use of the geometrical structure of the code have been proposed.

Topological codes appeared with Kitaev's toric code defined from a square tiling of the torus \cite{Ki97}. More generally, we can associate a quantum code to every tiling of surface \cite{BM07a}. These surface codes form a family of quantum LDPC codes because they are defined by stabilizers acting on a small number of qubits but they are also physically local, reducing the error during the syndrome measurement.
A first approach based on a perfect matching algorithm was proposed by Dennis, Kitaev, Landahl and Preskill to decode surface codes \cite{DKLP02}. Their algorithm computes a most likely error using a minimum weight perfect matching algorithm. It has been applied to fault-tolerant quantum computing by Wang, Fowler, Stephen and Holenberg \cite{WFSH10}.
More recently, inspired by ideas from statistical physics, Duclos-Cianci and Poulin introduced a renormalization group decoding algorithm, for the toric code \cite{DP10}.

Color codes, introduced by Bombin and Martin-Delgado, are also derived from tilings of surfaces \cite{BM06}. For fault-tolerance, they offer an advantage over surface codes in that we can apply, topologically, a larger number of operations on the encoded qubits. The perfect matching decoding algorithm has been generalized to color codes by Wang, Fowler, Hill and Hollenberg, but, the decoding problem becomes more difficult with color codes because it corresponds to a hypergraph matching \cite{WFHH10}. A message passing algorithm which shares some characteristics of the renormalization group decoding algorithm has been proposed by Sarvepalli and Raussendorf \cite{SR12}. The fault tolerance of a family of color codes based on the square-octogonal lattices is studied by Landahl, Andersen and Rice in \cite{LAR11}.

In the present work, we relate the decoding problem for color codes to the decoding problem for surface codes. This is done by the projection of an error acting on a color code onto three surface codes. This strategy allows us to transform every decoding algorithm for surface codes into a decoding algorithm for color codes.
As an example, we consider the decoding algorithm for color codes deduced from the perfect matching decoding algorithm for surface codes. Our numerical results on the hexagonal color codes exhibit a phase error threshold of $8.7\%$, higher than the threshold of $7.8\%$ observed by Sarvepalli and Raussendorf for the same codes \cite{SR12}.

Our main tool is a chain complex interpretation of CSS codes and topological codes. A similar approach has also been used recently in quantum information by Freedman and Hastings \cite{FH13} and by Audoux \cite{Au13}.
The projection from the color code to the surface codes is a morphism of chain complexes. Therefore, it conserves the whole structure of the chain complex representing the color code.

The optimal error threshold of some particular families of surface codes and color codes was previously estimated using a mapping onto a Ising model \cite{DKLP02, KBM09, Oh09, BAOKM12}.
Our approach allows us to compare the thresholds of color codes and surface codes achievable using an efficient decoding algorithm. Using the fact that the projection onto the surface codes conserves the error model, \emph{i.e.} the depolarizing channel for a color code is sent onto a depolarizing channel over the surface codes, we derive a general lower bound on the threshold of a family of color codes as a function of the threshold of the three corresponding surface codes.

The remainder of this article is organized as follows. The definition of CSS codes and the error model are recalled in Section~\ref{section:background}. Section~\ref{section:complexes} is devoted to the construction of CSS codes and surface codes from chain complexes. This point of view will be essential to our study of color codes.
We describe color codes with the language of chain complexes in Section~\ref{section:color}. Then we introduce a projection from the color code chain complex to the surface code chain complex in Section~\ref{section:projection}. Our decoding algorithm for color codes, and the analysis of its performance are presented in Section~\ref{section:decoding}. Figure~\ref{fig:couleur_correction} represents the different steps of our decoding algorithm in a hexagonal color code. We will refer to this example throughout this article.

\newpage
\section{Background} \label{section:background}

\subsection{The CSS construction of quantum codes} \label{section:CSS}

In this section, we review the combinatorial construction of CSS codes, their syndrome function and the noise model.

A stabilizer code of length $n$ is defined to be the set of fixed point of a family of commuting Pauli operators $S_1, S_2, \dots, S_r$ acting over $n$ qubits (that is over $(\C^2)^{\otimes n})$.
When this family is composed of $r_X$ operators of $\{I,X\}^{\otimes n}$ and $r_Z$ operators of $\{I, Z\}^{\otimes n}$, it suffices to check the commutation relations between operators of different kind. Recall that $I$ is the identity matrix of size 2, $X$ is the Pauli matrix representing the bit flip error and $Z$ is the Pauli matrix representing the phase error. This construction has been introduced by Calderbank, Shor \cite{CS96} and Steane \cite{St96}. It is the family of \emph{CSS codes}. The isomorphisms between $\{I, X\}$, $\{I, Z\}$ and the field $\F_2$ enables us to relate classical and quantum codes. By these isomorphisms, the stabilizers $S_i$ correspond to binary vectors and the commutation relation corresponds to the orthogonality relation in $\F_2^n$. This leads to a combinatorial description of CSS codes that is recalled in the next paragraph.

A CSS code is defined by two binary matrices $\HH_X \in \mc M_{r_X, n}(\F_2)$ and $\HH_Z \in \mc M_{r_Z, n}(\F_2)$ satisfying $\HH_X {\HH_Z}^t=0$, \emph{i.e.} with orthogonality between rows of $\HH_X$ and rows of $\HH_Z$.
The rows of these two matrices correspond to the stabilizers $S_i$ in the previous definition. The integer $n$ is the \emph{length} of the quantum code, it is the number of qubits used to describe the encoded states. The number of encoded qubits is $k=n-\rank \HH_X- \rank \HH_Z$.
Two classical codes are associated with such a quantum code.
Denote by $C_X$ the kernel of the matrix $\HH_X$ and denote by $C_Z$ the kernel of $\HH_Z$. Remark that $C_X^\perp$ is the space generated by the rows of $\HH_X$ and $C_Z^\perp$ is the space generated by the rows of $\HH_Z$.

A \emph{quantum error} acting on this CSS code is defined to be a pair of vectors $E=(E_X, E_Z) \in \F_2^n \times \F_2^n$. It simply corresponds to a Pauli operator, acting over $n$ qubits, decomposed following $X$ and $Z$.
When an error $(E_X, E_Z)$ corrupts a quantum state of the code, we can measure its \emph{syndrome}. It is the pair of vectors $(s_X, s_Z) \in \F_2^{r_Z} \times \F_2^{r_X}$ defined by $s_X = \HH_Z E_X^t$ and $s_Z = \HH_X E_Z^t$.
The syndrome can be measured and it is the only information that we possess about the error which occurs.

By construction of stabilizer codes and CSS codes, a group of errors fixes the quantum code. It is the set of errors sastifying $E_X \in C_X^\perp \text{ and } E_Z \in C_Z^\perp$. This set of operators corresponds to the subgroup, called the stabilizer group, generated by the operators $S_i$.
As a consequence, all the errors of the coset $(E_X+C_X^\perp , E_Z + C_Z^\perp)$ behave like $(E_X, E_Z)$. This property of stabilizer codes, referred to as the \emph{degeneracy}, allows us to correct all the errors of this coset using the same procedure.

The properties of errors for CSS codes are summarized in Table~\ref{tab:CSS_codes}.
\begin{table}[hb]
\centering
\footnotesize
\caption{Error representation for CSS codes.}
\label{tab:CSS_codes}
\begin{tabular}{c c c}
\hline
 & $X$-component & $Z$-component\\
\hline
\text{error} & $E_X \in \F_2^n$ & $E_Z \in \F_2^n$\\
syndrome & $s_X = \HH_Z E_X^t \in \F_2^{r_Z}$ & $s_Z = \HH_X E_Z^t \in \F_2^{r_X}$ \\
stabilizer & $E_X \in C_X^\perp$ & $E_Z \in C_Z^\perp$ \\
\hline
\end{tabular}
\end{table}

Let us now introduce the error model. We consider independently the two components $E_X$ and $E_Z$ of a quantum error and we assume that these two vectors of $\F_2^n$ are subjected to a binary symmetric channel of probability $p$. It is a simplified version of the depolarizing channel of probability $3p/2$. We assume that we are able to measure the syndrome without measurement errors.

Our goal is to compute a most likely error coset $(\tilde E_X+C_X^\perp, \tilde E_Z+C_Z^\perp)$ reaching a given syndrome $(s_X, s_Z)$. To simplify, we often look for a most likely error $(\tilde E_X, \tilde E_Z)$ reaching a given syndrome.
The  $X$-component and the $Z$-component of the error can be considered separately.


\subsection{Tilings of surfaces} \label{section:tilings}

The quantum codes studied in this paper are based on topological objects, that we introduce in this part.

A \emph{graph} is defined as a pair $(V, E)$, where $V$ is a set and $E$ is a set of pairs of elements of $V$. The elements of $V$ are called \emph{vertices} and the elements of $E$  are called \emph{edges}. The two vertices included in an edge $e=\{u, v\}$ are the \emph{endpoints} of $e$. The edge $e$ is said to be incident to the two vertices $u$ and $v$.
A \emph{hypergraph} is defined similarly but we allow edges to contain more than two vertices.

A \emph{tiling of surface} is a triple $G=(V, E, F)$, where $(V, E)$ is a finite graph embedded in a compact 2-manifold (surface), without boundary, without overlapping edges, and $F$ is the set of faces defined by this embedding. A face is given as the set of edges on its boundary. For example a face $f = \{e_1, e_2, \dots, e_m\}$ contains $m$ edges. This face $f$ is said to be incident to the edge $e_i$, for all $i$ such that $1\leq i \leq m$.
Without loss of generality, we can assume that this surface is smooth and connected. We also assume that the graph $(V, E)$ contains neither loops nor multiple edges.

Given a tiling of surface $G=(V, E, F)$, we construct its \emph{dual tiling} $G^*=(V^*, E^*, F^*)$. It is the graph of vertex set $V^*=F$, such that two vertices of $V^*$ are joined by an edge if and only if the corresponding faces share an edge in the graph $G$.
In other word, every edge of $G$ corresponds to an edge of its dual $G^*$. The set of edges incident to a vertex $v$ of $G$ induces a face of the dual tiling. This leads to a one-to-one correspondence between the faces of the dual graph $G^*$ and the vertices of the graph $G$.
In some degenerated cases, this dual graph could have loops or multiple edges. We assume that this does not happen here.

\section{Construction of CSS codes from chain complexes} \label{section:complexes}

The orthogonality relations required to define CSS codes can be deduced from the properties of chain complexes \cite{La02}. In this section, we describe these quantum codes with the language of chain complexes. This formalism will be essential to study color codes and to decompose their decoding problem.

\subsection{Definition of 2-complexes}

Let us recall the definition of a 2-complex.

\begin{defi} \label{defi:2-complex}
A \emph{2-complex} is a sequence of three $\F_2$-vector spaces $C_0, C_1, C_2$, endowed with two $\F_2$-linear applications $\partial_2 : C_2 \rightarrow C_1$ and $\partial_1: C_1\rightarrow C_0$ such that:
\begin{equation}
\label{eqn:complex_composition}
\partial_1 \circ \partial_2 = 0.
\end{equation}
\end{defi}

The applications $\partial_1$ and $\partial_2$ are called the \emph{boundary applications}. The vectors of the space $\Ker \partial_i$ are called $i$-cycles or \emph{cycles} and the vectors of the space $\im \partial_i$ are called $i$-boundaries or \emph{boundaries}. In what follows, we consider only finite dimensional spaces $C_i$.

\vspace{.2cm}
To define a chain complex from a tiling of surface, we will use the $\F_2$-linear structure of the power set $\mathcal P(X)$ of a finite set $X$. The power set $\mathcal P(X)$ is naturally a $\F_2$-linear vector space for the symmetric difference. Alternatively, it is can be represented as the $\F_2$-linear space of formal sums of elements of $X$:
$$
\bigoplus_{x \in X} \F_2 x = \{\sum_{x \in X} \lambda_x x \ | \ \lambda_x \in \F_2 \},
$$
endowed with the componentwise addition:
$\sum_x \lambda_x x + \sum_x \mu_x x = \sum_x (\lambda_x + \mu_x) x$.
A subset $Y \subset X$ corresponds to the vector $\sum_{y \in Y} y$ in the space $\oplus_{x \in X} \F_2 x$. Inversely, a vector $\sum_x \lambda_x x$ of the formal sum is the indicator vector of the subest $Y = \{ x \in X \ | \ \lambda_x=1 \}$.
Remark that the weight of $Y$, regarded as a binary vector, corresponds to the cardinality of the set $Y$. In what follows, it will be convenient to regard a subset as a binary vector.
The space $\mc P(X)$ is a $\F_2$-linear space of dimension $|X|$. A canonical basis is given by the set of all the singleton $\{x\} = Y$ included in $X$. We abusively say that the elements of $X$ form a basis of this space.

\vspace{.2cm}
In order to study and classify manifolds, different chain complexes have been introduced. For example, the \emph{cellular homology complex} associated with a tiling of surface $G=(V, E, F)$ is the chain complex defined on the spaces
$$
C_2 = \bigoplus_{f \in F} \F_2 f,
\qquad C_1 = \bigoplus_{e \in E} \F_2 e,
\qquad C_0 = \bigoplus_{v \in V} \F_2 v
$$
and the boundary maps $\partial_2$ and $\partial_1$, which are the $\F_2$-linear applications
$$
C_2 \overset{\partial_2}{\longrightarrow} C_1 \overset{\partial_1}{\longrightarrow} C_0
$$
such that $\partial_2(f) = \sum_{e \in f} e$ and $\partial_1(e) = \sum_{v \in e} v$.
The canonical basis of the space $C_2$ (respectively $C_1$ and $C_0$) is composed of the faces $f \in F$ (respectively the edges and the vertices) of the tiling $G$. That is to say, $C_2$ (respectively $C_1$ and $C_0$) is the power set of $F$ (respectively $E$ and $V$).
Geometrically, the application $\partial_2$ sends a face onto the set of edges on its topological boundary and the application $\partial_1$ sends an edge onto its endpoints. This explains the term boundary applications. Remark that the boundary of a path (\emph{i.e.} its image under $\partial_1$) is composed of its two \emph{terminal points}. We can easily check that the composition of the two boundary applications is zero. In the next section, we introduce the surface codes from this 2-complex.

\subsection{The CSS code associated with a 2-complex} \label{section:complex_codes}

From the structure of chain complexes, we immediately obtain the orthogonality relations needed to define a CSS code:
\begin{prop} \label{prop:complex_code}
Every 2-complex $C_2 \overset{\partial_2}{\rightarrow} C_1 \overset{\partial_1}{\rightarrow} C_0$, based on finite dimensional spaces, defines a CSS code of length $n = \dim C_1$, based on the matrices $\HH_X = \Mat(\partial_1, B_1, B_0)$ and $\HH_Z = \Mat(\partial_2, B_2, B_1)^t$, where $B_i$ is a basis of the space $C_i$.
\end{prop}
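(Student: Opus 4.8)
The plan is to verify directly the single defining condition of a CSS code, namely the orthogonality $\HH_X \HH_Z^t = 0$, and then to identify the length parameter. Fix bases $B_0, B_1, B_2$ of $C_0, C_1, C_2$ respectively, and let $\HH_X = \Mat(\partial_1, B_1, B_0)$ be the matrix of $\partial_1$ and $\HH_Z = \Mat(\partial_2, B_2, B_1)^t$ the transpose of the matrix of $\partial_2$. The first step is the standard fact from linear algebra that, under a fixed choice of bases, matrix multiplication represents composition of linear maps: the matrix of $\partial_1 \circ \partial_2 : C_2 \to C_0$ in the bases $B_2, B_0$ is $\Mat(\partial_1, B_1, B_0)\,\Mat(\partial_2, B_2, B_1)$.

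The second step is to feed in the complex condition. By Definition~\ref{defi:2-complex}, equation~\eqref{eqn:complex_composition} gives $\partial_1 \circ \partial_2 = 0$, so its matrix is the zero matrix:
\begin{equation*}
\Mat(\partial_1, B_1, B_0)\,\Mat(\partial_2, B_2, B_1) = 0.
\end{equation*}
Now $\Mat(\partial_2, B_2, B_1) = \HH_Z^t$ by definition of $\HH_Z$, and $\Mat(\partial_1, B_1, B_0) = \HH_X$, so this reads precisely $\HH_X \HH_Z^t = 0$. This is exactly the orthogonality relation between the rows of $\HH_X$ and the rows of $\HH_Z$ required in Section~\ref{section:CSS} to define a CSS code. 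The third, essentially bookkeeping, step is to read off the length: $\HH_X$ has $n = \dim C_1$ columns and $\HH_Z^t$ has $\dim C_1$ rows, i.e. $\HH_Z$ has $\dim C_1$ columns, so both act on binary vectors of length $\dim C_1$, which is therefore the length of the resulting CSS code. The numbers of rows $r_X = \dim C_0$ and $r_Z = \dim C_2$ play the role of the number of $X$-type and $Z$-type stabilizer generators respectively (they need not be independent, but the CSS definition does not require independence).

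There is no real obstacle here: the statement is a direct transcription of the complex axiom into matrix language. The only point demanding a modicum of care is the transpose convention — one must check that $\HH_Z$ is defined as the transpose of $\Mat(\partial_2, \cdot, \cdot)$ precisely so that the product $\HH_X \HH_Z^t$, rather than $\HH_X \HH_Z$, is the one that vanishes; this matches the asymmetric roles of $\HH_X$ and $\HH_Z$ (and of the two syndrome maps $s_X = \HH_Z E_X^t$, $s_Z = \HH_X E_Z^t$) in the CSS formalism recalled above. One should also note that the CSS code so obtained depends a priori on the choice of bases $B_i$ only up to an obvious change-of-basis equivalence, and for the cellular homology complex of a tiling the canonical bases (faces, edges, vertices) are the natural choice, yielding the surface codes discussed in the next section.
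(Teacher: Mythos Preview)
Your proof is correct and follows essentially the same approach as the paper: translate the chain-complex identity $\partial_1\circ\partial_2=0$ into the matrix equation $\HH_X\HH_Z^t=0$, and read off the length as $\dim C_1$ from the number of columns. The paper's version is just a terser statement of exactly this argument; your additional remarks on the transpose convention and basis dependence are sound but not needed for the proposition as stated.
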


\begin{proof}
Equation (\ref{eqn:complex_composition}) can be translated matricially into $\HH_X {\HH_Z}^t = 0$ which proves the orthogonality relations. The length of the resulting quantum code is given by the dimension of $C_1$ because it is the number of columns of the matrices $\HH_X$ and $\HH_Z$.
\end{proof}

The properties of such a CSS code depend on the 2-complex but also on the \emph{dual complex}. Let us recall the definition of this chain complex.
It is the 2-complex defined on the dual spaces $C_i^*$ (the space of $\F_2$-linear forms over $C_i$), and endowed with the transposed applications $\partial_i^* : C_{i-1}^* \rightarrow C_i^*$, which sends $\phi \in C_{i-1}^*$ onto $\partial_i^*(\phi) = \phi \circ \partial_i$. This leads to the dual complex
$$
C_2^* \overset{\partial_2^*}{\longleftarrow} C_1^* \overset{\partial_1^*}{\longleftarrow} C_0^*.
$$
In the dual basis, the matrix of the application $\partial_1^*$ is the matrix $\HH_X^t$ and the matrix of the application $\partial_2^*$ is the matrix $\HH_Z$.  Based on this representation of the matrices defining a CSS code, Table~\ref{tab:complex_codes} describe the errors, the syndrome function and the stabilizers with the language of chain complexes. It is the translation of Table~\ref{tab:CSS_codes}.
\begin{table}[h]
\centering
\footnotesize
\caption{Error representation for 2-complex codes.}
\label{tab:complex_codes}
\begin{tabular}{c c c}
\hline
 & $X$-component & $Z$-component\\
\hline
\text{error} & $E_X \in C_1^*$ & $E_Z \in C_1$\\
syndrome & $s_X = \partial_2^*(E_X) \in C_2^*$ & $s_Z = \partial_1(E_Z) \in C_0$ \\
stabilizer & $E_X \in \im \partial_1^*$ & $E_Z \in \im \partial_2$ \\
\hline
\end{tabular}
\end{table}


Using the linear isomorphism $C_i \simeq C_i^*$, which conserves the weight, this dual complex can be regarded in the spaces $C_i$:
$$
C_2 \overset{\partial_2^*}{\longleftarrow} C_1 \overset{\partial_1^*}{\longleftarrow} C_0.
$$
More precisely, remark that the space $C_i$, of basis $B_i$, can be seen as the power set of $B_i$, using the structure introduced in the previous section. This allows us to define the $\F_2$-linear application $\partial_i^*$ by
\begin{align}
\label{eqn:transposed_application}
\partial_i^*(x) = \sum_{\substack{y \in B_i \\ x \in \partial_i(y)}} y,
\end{align}
for all $x \in B_{i-1}$.

\vspace{.2cm}
Let us apply this construction to a cellular homology complex. From Proposition~\ref{prop:complex_code}, the cellular homology complex associated with a tiling $G=(V, E, F)$ defines a CSS code of length $n=|E|$. It is the \emph{surface code} associated with $G$. To see that this construction of surface codes coincides with the original definition of Kitaev \cite{Ki97}, and Bombin and Martin-Delgado \cite{BM07a}, it suffices to remark that the stabilizers, given in Table~\ref{tab:complex_codes}, correspond to the plaquette operators and the site operators introduced by Kitaev.
By definition of the cellular homology complex, the space $\im \partial_2$ is generated by the faces of the graph. These stabilizers correspond to the plaquette operators. We can see that the space $\im \partial_1^*$ corresponds to the site operators by using the definition of the application $\partial_1^*$ given in Equation~(\ref{eqn:transposed_application}). Thus, the corresponding stabilizer group is generated by the face operators in $Z$ and the site operators in $X$.

\subsection{The decoding problem for surface codes} \label{section:surface}

Before coming to a detailed study of the decoding problem for color codes, let us recall the basic ideas involved in the decoding algorithm for surface codes.

By symmetry between the cellular homology complex and its dual, it is sufficient to consider the $Z$-component of the error. Indeed, in the special case of cellular homology complexes, the dual complex is the cellular homology complex associated with the dual graph~\cite{Ha02}:
$$
C_2(G^*) \overset{\partial_2^{G^*}}{\rightarrow} C_1(G^*) \overset{\partial_1^{G^*}}{\rightarrow} C_0(G^*).
$$
This leads to the simplified framework presented in Table~\ref{tab:surface_codes}. In this table, we consider only the $Z$-component of the error, denoted by $x$. We give a chain complex point of view and a graphical point of view. To work with the $X$-component of the error, replace the graph $G$ by its dual $G^*$.
\begin{table}[h]
\centering
\footnotesize
\caption{Error representation for surface codes.}
\label{tab:surface_codes}
\begin{tabular}{c c c}
\hline
 & 2-complex point of view & graphical point of view\\
\hline
\text{error} & $x \in C_1$ & $x \subset E$ is a subset of the edge set\\
syndrome & $s = \partial_1(x) \in C_0$ & $s \subset V$ is the set of terminal vertices of $x$\\
stabilizer & $x \in \im \partial_2$ & $x$ is a boundary \\
\hline
\end{tabular}
\end{table}

Hence, our goal is to recover a set of edges $x$, from the knowledge of its set of terminal vertices $s$, and it is sufficient to determine $x$ up to a boundary.
Recall that the terminal vertices of a set of edges are the vertices which are reached an odd number of times by an edge. To get a most likely error, that is to say an error of minimum weight, it suffices to choose this set $x$ with minimum cardinality.

\begin{figure}[htbp]
\centering
\includegraphics[scale=1]{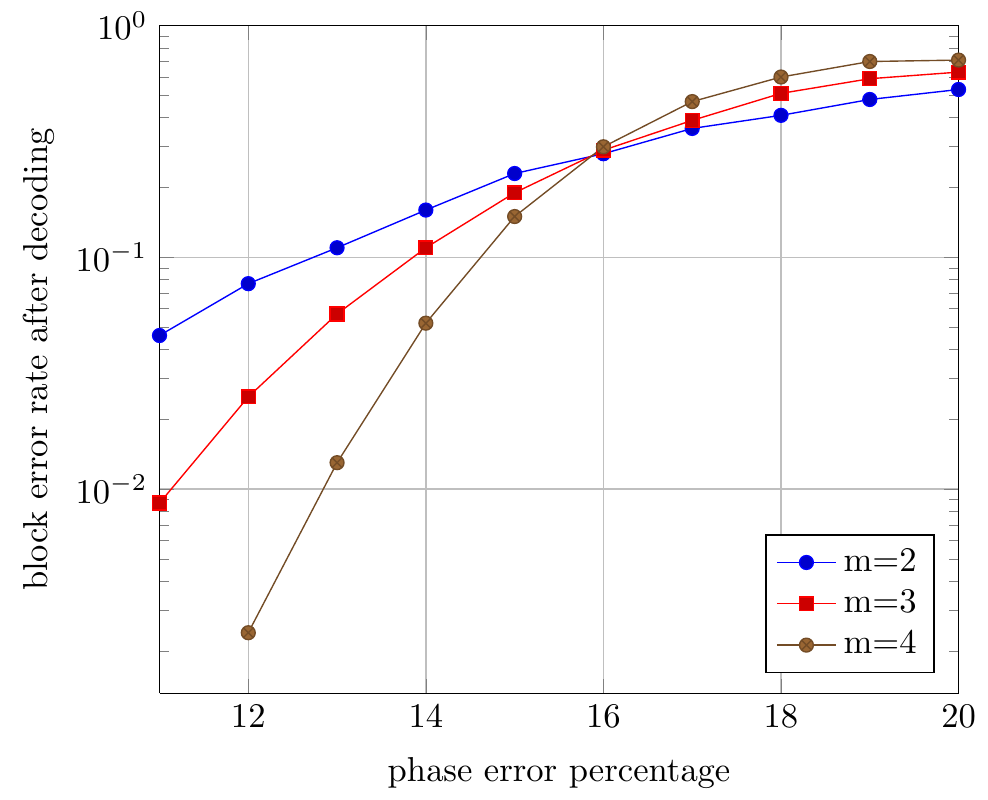}
  \caption{Bloc error rate after decoding as a function of the phase error rate, for hexagonal toric codes of length $9.2^{2m}$.}
  \label{fig:perf_surface}
\end{figure}

The perfect matching decoding algorithm \cite{DKLP02} determines a most likely error by matching pairs of vertices of the syndrome $s$, using Edmonds' minimum weight perfect matching algorithm \cite{Ed65a, Ed65b}. To illustrate the performance of this decoding algorithm, we consider a family of surface codes based on hexagonal tilings of the torus. Our results are represented in Figure~\ref{fig:perf_surface}. We observe a threshold closed to $15.9\%$. This threshold will be related to the performance of a family of color codes in section~\ref{subsection:threshold}. We use the implementation ''blossom~V'' of the minimum weight perfect matching algorithm, due to Kolmogorov \cite{Ko09}. The worst case complexity of this decoding algorithm is in $O(|V|^3 |E|)$, but the typical complexity is definitely better. An improved version of the perfect matching decoding algorithm, with a better complexity, has been proposed recently by Fowler, Whiteside and Hollenberg \cite{FWH12}.

The rest of this paper is devoted to the study of the decoding problem for color codes.

\section{A 2-complex definition of color codes} \label{section:color}

In this section, we describe color codes with the language of chain complexes. 
Color codes are usually defined from a 3-regular tiling of surface $G$, whose faces are 3-colorable \cite{BM06}. Recall that the faces of a graph are said to be 3-colorable if and only if there exists a 3-coloration of the faces such that two faces sharing an edge do not support the same color.
We will consider a chain complex definition of color codes based on the dual tiling $G^*$. This point of view will be more appropriate for our decomposition of the decoding problem.

\begin{figure}[htbp]
\centering
\includegraphics{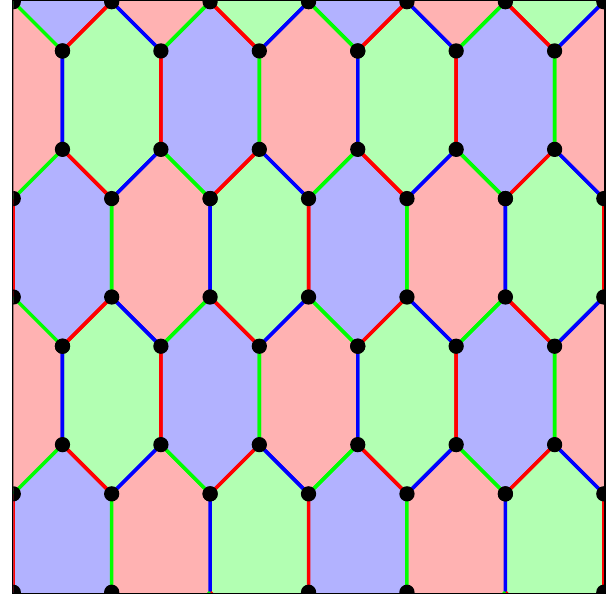}
\includegraphics{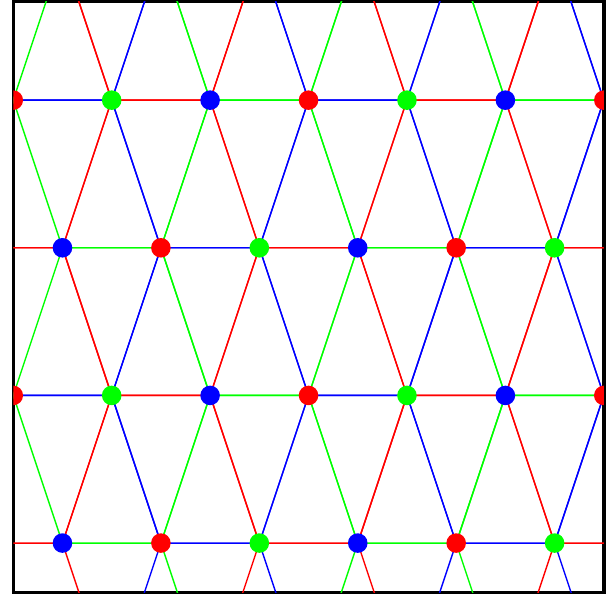}
\caption{A tiling of the torus with 3-colorable faces and its dual. The vertices of the dual inherit of the 3-coloration.}
\label{fig:colored_tiling_dual}
\end{figure}

In what follows, $G=(V, E, F)$ is a 3-regular tiling of surface. We suppose that the faces of this graph are 3-colorable. As a consequence, the dual graph $G^*=(V^*,E^*,F^*)$ is equipped with triangle faces and its vertices are 3-colorable (without neighbors supporting the same color). We assume that, these graphs, $G$ and $G^*$, have neither loop nor multiple edge.
An example of hexagonal tiling with 3-colored faces and its dual are represented in Figure~\ref{fig:colored_tiling_dual}.

\subsection{Definition of the hypergraph}

Our first objective is to recall the definition of color codes with the formalism of chain complexes. This is done by considering the dual tiling $G^*$ as a hypergraph~$\mathcal H$.
By definition, a hypergraph is a pair $(\mc V, \mc E)$ where $\mc V$ is a finite vertex set and $\mc E$ is composed of subsets of $\mc V$ (not necessarily of cardinality 2). Just as in the definition of a tiling of surface, the hypergraph is equipped of a set of faces $\mc F$, which is composed of subsets of $\mc E$.

\begin{defi} \label{defi:hypergraphe}
The {\em hypergraph associated with $G^*$} is defined to be the hypergraph
$\mathcal H=(\mathcal V, \mathcal E, \mc F)$ such that:
\begin{itemize}
\item The vertex set $\mc V$ of $\mc H$ is equal to the vertex set $V^*$ of $G^*$.
\item The hyperedges are the triples of vertices $\mathcal E(f)$ included in a face $f$ of $G^*$. The set of hyperedges is $\mc E = \{\mc E(f) \ | \ f \in F^*\}$.
\item The hyperfaces are the sets $\mc F(v)$ of hyperedges incident to a vertex $v$ of $G^*$. The set of hyperfaces is $\mc F = \{\mc F(v) \ | \ v \in V^* \}$.
\end{itemize}
\end{defi}

Since the graph $G^*$ is chosen to have triangle faces, the hyperedges always contain 3 vertices. That means that $\mc H$ is a 3-uniform hypergraph.
These 3 vertices being neighbors, their colors are different.
More precisely, every face $f$ of the graph $G^*$ is a triple of edges
$f = \{ \{u_R, u_G\}, \{u_G, u_B\}, \{u_B, u_R\} \}$,
such that $u_{\bf c}\in V^* = \mc V$ is a vertex of color ${\bf c}$.
The hyperedge $\mc E(f) \in\mc E$ associated with this face is 
\begin{equation} \label{eqn:hyperedges}
\mc E(f) =\{u_R, u_G, u_B\}.
\end{equation}

When the graph $G^*$ is $m$-regular, all the hyperfaces contain $m$ hyperedges.
Let us describe these hyperfaces.
The tiling $G^*$ is a triangulation, therefore the vertex $v$ has $m$ neighbors $v_1, v_2, \dots, v_m$, ordered such that $v_i$ and $v_{i+1}$, and $v_m$ and $v_1$, are neighbors. The $m$ hyperedges of $\mc H$, containing the vertex $u$ are given by $e_i= \{u, v_i, v_{i+1}\}$, for $1 \leq i \leq m-1$, and $e_m = \{u, v_m, v_1\}$. The hyperface $\mc F(u)$ is
\begin{equation} \label{eqn:hyperfaces}
\mc F(u) = \{e_1, e_2, \dots, e_m \}.
\end{equation}

The hypergraph associated with the hexagonal tiling of Figure~\ref{fig:colored_tiling_dual} is represented in Figure~\ref{fig:colored_tiling_hyperface}. Its hyperfaces are composed of 6 hyperedges.

\subsection{Definition of color codes from hypergraph 2-complexes}

Using the 2-dimensional structure of the hypergraph $\mc H$, we can define a 2-complex:
\begin{defi} \label{defi:hypergraph_complex}
The {\em 2-complex associated with the hypergraph $\mc H$} is defined to be the complex composed of the three spaces
$$
C_2(\mc H) = \bigoplus_{f \in \mc F} \F_2 f,
\qquad C_1(\mc H) = \bigoplus_{e \in \mc E} \F_2 e,
\qquad C_0(\mc H) = \bigoplus_{v \in \mc V} \F_2 v
$$
and the boundary maps $\partial_2^{\mc H}$ and $\partial_1^{\mc H}$, which are the $\F_2$-linear applications
$$
C_2(\mc H) \overset{\partial_2^{\mc H}}{\longrightarrow} C_1(\mc H) \overset{\partial_1^{\mc H}}{\longrightarrow} C_0(\mc H)
$$
such that $\partial_2^{\mc H}(f) = \sum_{e \in f} e$, for all $f \in \mc F$ and $\partial_1^{\mc H}(e) = \sum_{v \in e} v$, for all $e \in \mc E$.
\end{defi}

From Equation~(\ref{eqn:hyperedges}), every hyperedge $e$ of $\mc H$ can be write $e = \{u_R, u_G, u_B\}$, such that $u_{\bf c}$ is a vertex of color $\bf c$. The boundary of this hyperedge is
\begin{equation} \label{eqn:partial_1_H}
\partial_1^{\mc H}(e) = u_R+u_G+u_B \in C_0(\mc H).
\end{equation}
Using the notation of Equation~(\ref{eqn:hyperfaces}), the boundary of a hyperface $f = \{e_1, e_2, \dots, e_m\}$ is the sum
\begin{equation} \label{eqn:partial_2_H}
\partial_2^{\mc H}(f) = \sum_{i=1}^m e_i \in C_1(\mc H).
\end{equation}

As in the case of the cellular homology complex defined in Section~\ref{section:complexes}, a vector of $C_2(\mc H)$ (respectively $C_1(\mc H)$ and $C_0(\mc H)$) can be regarded as a subset of $\mc F$ (respectively $\mc E$ and $\mc V$).
The image of a subset of $\mc F$, by the application $\partial_2^{\mc H}$, is the set of hyperedges included in an odd number of hyperfaces this subset. Similarly, the image of a subset of hyperedge $x \subset \mc E$, by the application $\partial_1^{\mc H}$, is the set of vertices of $\mc H$ that are contained in an odd number of hyperedges of $x$. By analogy with the cellular homology complex, these vertices are called the \emph{terminal vertices} of $x$.

\begin{figure}[htbp]
\centering
\includegraphics{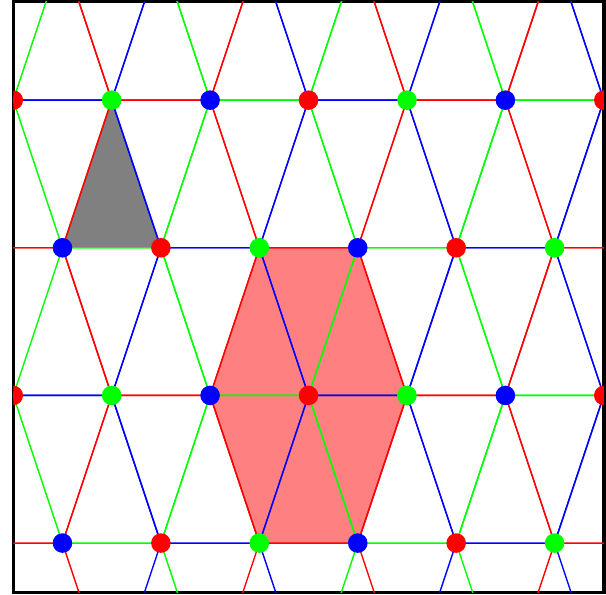}
\caption{In grey, a hyperedge of the hypergraph $\mc H$ associated with an hexagonal tiling of the torus. In red, a hyperface of this hypergraph.}
\label{fig:colored_tiling_hyperface}
\end{figure}

This sequence of vector spaces is a chain complex because the composition of the boundary maps $\dha$ and $\dhb$ is zero.
Indeed, it suffices to check that the image of a hyperface $\mc F(v)$ by $\dha\circ\dhb$ is zero. This proves that $\dha\circ\dhb=0$ by linearity.
From Proposition~\ref{prop:complex_code}, this relation enables us to define a CSS code. This quantum code is called the \emph{color code} associated with the graph $G$. This definition is equivalent to the original definition of color codes \cite{BM06}.

\subsection{The decoding problem for color codes}

In this part, we present the decoding problem for color codes. In what follows, the sequence $$C_2(\mc H) \overset{\partial_2^{\mc H}}{\longrightarrow} C_1(\mc H) \overset{\partial_1^{\mc H}}{\longrightarrow} C_0(\mc H)$$
denotes a 2-complex defining a color code.

As in the case of surface codes, the decoding problem for surface codes can be simplified by exploiting the symmetries between the 2-complex and its dual.
\begin{lemma} \label{lemma:color_duality}
The dual complex of 
$C_2(\mc H) \overset{\partial_2^{\mc H}}{\longrightarrow} C_1(\mc H) \overset{\partial_1^{\mc H}}{\longrightarrow} C_0(\mc H)$
is the same complex.
\end{lemma}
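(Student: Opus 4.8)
The plan is to exhibit an explicit isomorphism of $2$-complexes between $C_2(\mc H) \overset{\partial_2^{\mc H}}{\rightarrow} C_1(\mc H) \overset{\partial_1^{\mc H}}{\rightarrow} C_0(\mc H)$ and its dual, read in the spaces $C_i(\mc H)$ using the canonical bases $\mc F$, $\mc E$, $\mc V$. Recall from Equation~(\ref{eqn:transposed_application}) that, once we identify $C_i(\mc H)$ with the power set of its basis, the transposed map $(\partial_i^{\mc H})^*$ acts on a basis element $x \in B_{i-1}$ by $(\partial_i^{\mc H})^*(x) = \sum_{y \in B_i,\ x \in \partial_i^{\mc H}(y)} y$. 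So I must produce bijections $\varphi_2 : \mc F \to \mc V$, $\varphi_1 : \mc E \to \mc E$, $\varphi_0 : \mc V \to \mc F$ — extended $\F_2$-linearly to the corresponding power sets — making the two squares commute, i.e. $\varphi_1 \circ \partial_2^{\mc H} = (\partial_1^{\mc H})^* \circ \varphi_2$ and $\varphi_0 \circ \partial_1^{\mc H} = (\partial_2^{\mc H})^* \circ \varphi_1$.

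The natural candidates come from the definition of $\mc H$ in Definition~\ref{defi:hypergraphe}: hyperfaces are indexed by vertices of $G^*$ (via $v \mapsto \mc F(v)$) and hyperedges are indexed by faces of $G^*$ (via $f \mapsto \mc E(f)$), while $\mc V = V^*$ itself. Thus I take $\varphi_2 : \mc F(v) \mapsto v$ (the identity $\mc V = V^*$ in reverse), $\varphi_0 : v \mapsto \mc F(v)$, and $\varphi_1$ the identity on $\mc E$. These are manifestly bijections. It then remains to verify the two commutation relations on basis elements. For the first square: starting from a hyperface $\mc F(v)$, $\partial_2^{\mc H}$ sends it to $\sum_{e \in \mc F(v)} e$, the sum of the hyperedges incident to $v$; applying $\varphi_1 = \mathrm{id}$ leaves this unchanged. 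On the other branch, $\varphi_2(\mc F(v)) = v$, and $(\partial_1^{\mc H})^*(v) = \sum_{e \in \mc E,\ v \in \partial_1^{\mc H}(e)} e = \sum_{e \ni v} e$ by Equation~(\ref{eqn:partial_1_H}), which is exactly the set of hyperedges containing $v$. Since "hyperedge incident to $v$" and "hyperedge containing $v$" mean the same thing, the two sides agree. For the second square: from a hyperedge $e = \{u_R, u_G, u_B\}$, $\partial_1^{\mc H}(e) = u_R + u_G + u_B$, and $\varphi_0$ maps this to $\mc F(u_R) + \mc F(u_G) + \mc F(u_B)$; on the other branch, $\varphi_1(e) = e$ and $(\partial_2^{\mc H})^*(e) = \sum_{f \in \mc F,\ e \in \partial_2^{\mc H}(f)} f = \sum_{f \ni e} f$, i.e. the sum of the hyperfaces containing $e$. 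A hyperface $\mc F(w)$ contains $e$ iff $w \in \{u_R, u_G, u_B\}$, so this sum is again $\mc F(u_R) + \mc F(u_G) + \mc F(u_B)$, and the square commutes.

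Finally I should note that a commuting diagram of bijective linear maps between two $2$-complexes whose vertical arrows are isomorphisms is precisely an isomorphism of $2$-complexes (in particular the relation $\partial_1^{\mc H} \circ \partial_2^{\mc H} = 0$ is transported to $(\partial_2^{\mc H})^* \circ (\partial_1^{\mc H})^* = 0$, consistently with the dual complex being a $2$-complex), so the two complexes are indeed identified. I do not anticipate a serious obstacle: the only subtlety is bookkeeping — keeping straight the three distinct roles played by $V^*$ (as $\mc V$, as the index set of $\mc F$ via $v \mapsto \mc F(v)$, and, after dualizing, as the index set of $C_0(\mc H)^* \cong C_0(\mc H)$) and making sure "incident", "containing", and "supported on" are used consistently when unwinding $\partial_1^{\mc H}$, $\partial_2^{\mc H}$ and their transposes via Equation~(\ref{eqn:transposed_application}). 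One could equivalently phrase the whole argument by observing that the incidence structure of $\mc H$ — vertices in hyperedges, hyperedges in hyperfaces — is symmetric under the correspondence $v \leftrightarrow \mc F(v)$, $f \leftrightarrow \mc E(f)$, which is the combinatorial heart of the statement.
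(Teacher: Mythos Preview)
Your proof is correct and follows essentially the same approach as the paper: both use the bijection $v \leftrightarrow \mc F(v)$ between $\mc V$ and $\mc F$ to swap $C_0(\mc H)$ and $C_2(\mc H)$, and then invoke the formula in Equation~(\ref{eqn:transposed_application}) to identify the transposed boundary maps with the original ones. The paper's proof is simply a terse two-sentence version of what you have written out explicitly on basis elements.
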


\begin{proof}
As explained in Section~\ref{section:complex_codes}, from the isomorphism between the space $C_i$ and its dual, the dual complex can be regarded over the spaces $C_i(\mc H)$:
\begin{equation} \label{eqn:dual_comp}
C_2(\mc H) \overset{(\partial_2^{\mc H})^*}{\longleftarrow} C_1(\mc H) \overset{(\partial_1^{\mc H})^*}{\longleftarrow} C_0(\mc H).
\end{equation}
From the one-to-one correspondence between $\mc V$ and $\mc F$, the space $C_0(\mc H)$ and $C_2(\mc H)$ are in bijection, thus we can permute these two spaces in Equation~(\ref{eqn:dual_comp}). Then, using the definition of the transposed application given in Equation~(\ref{eqn:transposed_application}), we recover the original complex.
\end{proof}

Thus the two components of the error can be decoded using the same procedure. In what follows we only consider the $Z$-component of the error. Table~\ref{tab:color_codes} summarizes the properties of errors on color codes. The 2-complex description of the error is deduced immediately from Table~\ref{tab:complex_codes} and Definition~\ref{defi:hypergraph_complex}. 

\begin{table}[h]
\centering
\footnotesize
\caption{Error representation for color codes.}
\label{tab:color_codes}
\begin{tabular}{c c c}
\hline
 & 2-complex point of view & graphical point of view\\
\hline
\text{error} & $x \in C_1(\mc H)$ & $x \subset \mc E$ is a subset of the hyperedge set\\
syndrome & $s = \partial_1^{\mc H}(x) \in C_0(\mc H)$ & $s \subset \mc V$ is the set of terminal vertices of $x$\\
stabilizer & $x \in \im \partial_2^{\mc H}$ & $x$ is a boundary\\
\hline
\end{tabular}
\end{table}

The problem of computing a most likely error for a color code can thus be reduced to the determination of a set of hyperedges $x \subset \mc E$, whose terminal vertices are the vertices of $s$, for a given set $s \subset \mc V$. Moreover, we are interested in the determination of $x$ up to the boundaries.

%

The analogous problem for surface codes, stated in Section~\ref{section:surface}, was solved by using a minimum weight perfect matching algorithm. However, the hypergraph structure makes this problem difficult for color codes. For example, the 3-dimensional matching problem is NP-complete \cite{GJ73}.
To decode color codes, our basic idea consists of projecting the error acting on a color code onto three surface codes. Our next goal is to introduce these surface codes and to study the projection onto these codes.


\section{The projection as a morphism of 2-complexes}
\label{section:projection}

In this section, we start with some basic properties of the hypergraph complex. Then, we introduce the three surface codes associated with a color code. Our purpose is to transfer the decoding problem for a color code in these surface codes. Thus we have to transfer the whole 2-complex structure of the color code in the surface codes. To this end we will prove that the projection onto the surface codes is a morphism of 2-complexes.

\subsection{Relations between the hypergraph complex and the graph complex}

One of the main difficulty of this section is that we will deal with several different 2-complexes. The first one is the hypergraph chain complex:
$$
C_2(\mc H) \overset{\partial_2^{\mc H}}{\longrightarrow} C_1(\mc H) \overset{\partial_1^{\mc H}}{\longrightarrow} C_0(\mc H)
$$
This 2-complex is denoted by $\mc C(\mc H)$.
A second one is the cellular homology complex associated with the graph $G^*$, denoted by $\mc C(G^*)$:
$$
C_2(G^*) \overset{\partial_2^*}{\longrightarrow} C_1(G^*) \overset{\partial_1^*}{\longrightarrow} C_0(G^*).
$$
This part introduces basic tools to connect these complexes.

The one-to-one correspondence between $\mc E$ and $F^*$ and between $\mc V$ and $V^*$ can be extended to the vectors spaces $C_i$:
\begin{lemma} \label{lemma:bijection}
We have the following equalities:
\begin{itemize}
\item $C_0(\mc H) = C_0(G^*)$.
\item $C_1(\mc H) = C_2(G^*)$,
\end{itemize}
\end{lemma}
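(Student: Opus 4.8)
The plan is to deduce both equalities directly from the defining bijections between the basis sets, exploiting the fact (set up in Section~\ref{section:complexes}) that a bijection between the bases of two formal-sum $\F_2$-spaces identifies the two spaces. Concretely, I would regard each $C_i(\mc H)$ and each $C_j(G^*)$ as the power set of its basis, and transport one basis onto the other.

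For $C_0(\mc H) = C_0(G^*)$ there is essentially nothing to do: by the first item of Definition~\ref{defi:hypergraphe} the vertex set $\mc V$ of $\mc H$ is the vertex set $V^*$ of $G^*$, so $C_0(\mc H) = \bigoplus_{v\in\mc V}\F_2 v$ and $C_0(G^*) = \bigoplus_{v\in V^*}\F_2 v$ are the space of formal sums over one and the same index set. For $C_1(\mc H) = C_2(G^*)$, the one point that needs an argument is that $f \mapsto \mc E(f)$ is a \emph{bijection} from $F^*$ onto $\mc E$. Surjectivity holds by definition, since $\mc E = \{\mc E(f)\mid f\in F^*\}$. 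For injectivity I would use that $G^*$ is a simple graph with triangular faces: a face $f$ has the form $\{\{u_R,u_G\},\{u_G,u_B\},\{u_B,u_R\}\}$ with $\mc E(f) = \{u_R,u_G,u_B\}$, and since $G^*$ has no multiple edges these three edges are the only edges joining vertices of $\mc E(f)$; hence $f$, viewed as a subset of $E^*$, is recovered from $\mc E(f)$, so $\mc E(f) = \mc E(f')$ forces $f = f'$. (Equivalently, two distinct faces of $G^*$ sharing the same triple of boundary edges would produce a loop or a multiple edge in $G$, contrary to the standing assumptions.) With this bijection in hand, $C_1(\mc H) = \bigoplus_{e\in\mc E}\F_2 e$ and $C_2(G^*) = \bigoplus_{f\in F^*}\F_2 f$ are formal-sum spaces over index sets in canonical one-to-one correspondence, and identifying the basis vector $f$ with $\mc E(f)$ yields the claimed equality.

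The only genuinely nontrivial ingredient is the injectivity of $f\mapsto\mc E(f)$, which is a mild point resting on the no-loop / no-multiple-edge hypotheses on $G$ and $G^*$; everything else is just the functoriality of the $\F_2$-formal-sum construction under bijections of finite sets. I do not expect a serious obstacle: this lemma is a bookkeeping step whose role is to let us freely pass between $\mc C(\mc H)$ and $\mc C(G^*)$ in the sequel, and the boundary maps are deliberately not claimed to agree under these identifications.
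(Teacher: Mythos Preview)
Your argument is correct and matches the paper's reasoning: the paper does not give a formal proof of this lemma, stating only that ``the one-to-one correspondence between $\mc E$ and $F^*$ and between $\mc V$ and $V^*$ can be extended to the vector spaces $C_i$'', which is exactly what you spell out. Your added justification of the injectivity of $f\mapsto \mc E(f)$ via the simplicity of $G^*$ is the natural way to make this precise, and is consistent with the standing hypotheses.
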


The second item of this lemma allows us to apply the transformation $\partial_2^*: C_2(G^*) \rightarrow C_1(G^*)$, of the cellular homology complex of $G^*$, to every vector of $x \in C_1(\mc H)$.
In the following lemma, we compute the image of $x \in C_1(\mc H)$ under $\partial_2^*$, when $x$ corresponds to a hyperedge $e \in \mc E$, and when $x$ corresponds to the boundary of a hyperface $\partial_2^{\mc H} \left( \mc F(v) \right)$.

\begin{lemma} \label{lemma:images}
With the notations of Equation~(\ref{eqn:hyperedges}) and Equation~(\ref{eqn:hyperfaces}), we have:
\begin{itemize}
\item if $e=\{u_R, u_G, u_B\}$ is a hyperedge of $\mc H$, then
$$
\partial_2^*(e) = \{u_R, u_G\} + \{u_G, u_B\} + \{u_B, u_R\} \in C_1(G^*).
$$
Moreover, the edge $\{u_{\bf c}, u_{\bf c'} \}$ has color $\bf c'' \neq c, c'$.
\item if $f=\mc F(u) = \{e_1, e_2, \dots, e_m\}$ is the hyperface of $\mc H$, then
$$
\partial_2^* \left( \partial_2^{\mc H} (f) \right) = \{v_1, v_2\} + \{v_2, v_3\} + \dots + \{v_{m-1}, v_m\} + \{v_m, v_1\} \in C_1(G^*),
$$
where $e_i = \{u, v_i, v_{i+1}\}$, for $i \in \{1, \dots, m-1\}$, and $e_m = \{u, v_m, v_1\}$.

Moreover, all the edges $\{v_i, v_{i+1}\}$ and $\{v_m, v_1\}$ have the same color. It is the color of $u$.
\end{itemize}
\end{lemma}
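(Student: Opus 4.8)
The plan is to unwind the definitions in both items, using the identification $C_1(\mc H)=C_2(G^*)$ of Lemma~\ref{lemma:bijection} to make sense of $\partial_2^*$ applied to a hyperedge, and then to carry out one short $\F_2$-cancellation for the hyperface computation. Before anything else I would fix once and for all the edge-colouring of $G^*$: since the faces of $G$ are $3$-coloured and $G^*$ is dual to $G$, the vertices of $G^*$ inherit a \emph{proper} $3$-colouring (neighbours get distinct colours, as already noted after Definition~\ref{defi:hypergraphe}); hence every edge of $G^*$ joins two vertices of distinct colours $\mathbf c,\mathbf c'$, and I colour that edge with the third colour $\mathbf c''$. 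Properness of the vertex colouring is exactly what makes this well defined, and it reduces every colour assertion in the lemma to a local count inside one triangle.

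For the first item, a hyperedge $e=\{u_R,u_G,u_B\}$ is, under $C_1(\mc H)=C_2(G^*)$, precisely the triangular face $f$ of $G^*$ with vertex set $\{u_R,u_G,u_B\}$; by Definition~\ref{defi:hypergraphe} and Equation~(\ref{eqn:hyperedges}) the three edges on the boundary of $f$ are $\{u_R,u_G\}$, $\{u_G,u_B\}$, $\{u_B,u_R\}$, so $\partial_2^*(e)=\partial_2^*(f)$ is their sum. The colour claim is then immediate: the edge $\{u_{\mathbf c},u_{\mathbf c'}\}$ joins a colour-$\mathbf c$ vertex and a colour-$\mathbf c'$ vertex, hence by construction carries the remaining colour $\mathbf c''$.

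For the second item I would expand $\partial_2^{\mc H}(f)=\sum_{i=1}^m e_i$ (Equation~(\ref{eqn:partial_2_H})) and apply linearity of $\partial_2^*$ together with the first item. Writing $e_i=\{u,v_i,v_{i+1}\}$ with indices taken mod $m$, its three boundary edges are the two "spokes" $\{u,v_i\}$, $\{u,v_{i+1}\}$ and the "rim" edge $\{v_i,v_{i+1}\}$. Summing over $i$, each spoke $\{u,v_j\}$ appears exactly twice — once coming from $e_{j-1}$ and once from $e_j$ — so it vanishes over $\F_2$, whereas each rim edge appears exactly once (here I invoke the standing hypothesis that $G^*$ is a triangulation without multiple edges, so the rim edges $\{v_1,v_2\},\dots,\{v_m,v_1\}$ are pairwise distinct and no further cancellation occurs). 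What survives is $\partial_2^*(\partial_2^{\mc H}(f))=\sum_{i=1}^m\{v_i,v_{i+1}\}$, the link cycle of $u$, which is the claimed formula. For the colour: inside the triangular face $\{u,v_i,v_{i+1}\}$ the three vertices carry the three distinct colours, so if $u$ has colour $\mathbf c$ then $v_i,v_{i+1}$ carry the other two and the edge $\{v_i,v_{i+1}\}$ has colour $\mathbf c$ by the rule above; this holds for every $i$, so all rim edges share the colour of $u$.

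I do not expect a genuinely hard step here. The only points that need care are making the edge-colouring of $G^*$ precise and well defined, and handling the cyclic indexing in the hyperface sum so that both the spoke cancellation and the "no extra rim cancellation" are correctly justified. A useful sanity check I would keep in mind is that the output $\sum_i\{v_i,v_{i+1}\}$ is automatically a $1$-cycle, i.e.\ lies in $\Ker\partial_1^*$, consistently with $\partial_1^*\circ\partial_2^*=0$.
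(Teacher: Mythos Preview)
Your proposal is correct and follows essentially the same approach as the paper: identify a hyperedge with the corresponding triangular face of $G^*$ and read off its three boundary edges, then for the hyperface expand $\partial_2^{\mc H}(f)=\sum_i e_i$ and apply linearity of $\partial_2^*$. You are in fact more explicit than the paper, which simply says ``use the linearity of $\partial_2^*$'' without spelling out the spoke/rim cancellation over $\F_2$; your argument that each spoke $\{u,v_j\}$ appears in exactly two adjacent triangles and hence cancels is precisely the content hidden in that line, and your colour argument (the $v_i$ alternate the two non-$\mathbf c$ colours, so every rim edge is coloured $\mathbf c$) matches the paper's observation verbatim.
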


\begin{proof}
In the first equation, the hyperedge $e \in \mc E$ is regarded as a face of $G^*$. This face contains the 3 edges $\{u_R, u_G\}, \{u_G, u_B\}$ and $\{u_B, u_R\}$ of $G^*$. Then, the definition of $\partial_2^*$ proves the first item.

To prove the second equality, write $\partial_2^{\mc H}(f) = \sum_i e_i$, and use the linearity of $\partial_2^*$.
To determine the color of the edges of $\partial_2^* \left( \partial_2^{\mc H} (f) \right)$, observe that if $u$ has color $\bf c$, then the vertices $v_i$ share the two other colors alternatively, \emph{i.e.} with $color(v_i) \neq color(v_{i+1})$.
\end{proof}

From this lemma, we observe that the image of a hyperfaces under $\partial_2^* \circ \partial_2^{\mc H}$ is a monochromatic cycle of length $m$. In the next section, we construct a  new tiling, included in $G^*$, using these cycles as faces.
For example, we can see in Figure~\ref{fig:colored_tiling_hyperface}, a hyperface (the hexagon in red) composed of 6 hyperedges (triangles). Its image under $\partial_2^*$ is the red cycle of length 6, which is at the boundary of this hexagon.

\subsection{The three surface codes associated with a color code} \label{section:3_surfaces}

Based on the 3-coloration of the graph $G^*$, we will construct three surface codes. The 3-coloration of the vertices of $G^*$ induces a 3-coloration of the edges. An edge $\{u, v\}$ of $G^*$ inherits of the color which is absent from its endpoints $u$ and $v$. We restrict our attention to the subgraph of $G^*$, induced by the edges colored with $\bf c$, for ${\bf c} \in \{ R, G, B \}$.

\begin{defi} \label{defi:color_subgraph}
The \emph{graph $G^*({\bf c})$} is defined to be the subgraph of $G^*$ induced by the edges colored with ${\bf c}$.
\end{defi}

A graph $G^*$ and its red subgraph $G^*(R)$ are drawn in Figure~\ref{fig:colored_tiling_dual_red}. The following proposition proves that these subgraphs inherit of the tiling structure of the graph $G^*$:

\begin{prop} \label{prop:color_subtiling}
The graph $G^*({\bf c})$, equipped with the set of faces of the form $\partial_2^* \circ \partial_2^{\mc H}(\mc F(v))$, where $v$ is a vertex colored with ${\bf c}$, is a tiling of surface. 
\end{prop}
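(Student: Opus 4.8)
The plan is to verify the defining axioms of a tiling of surface from Section~\ref{section:tilings} for the triple consisting of the vertex set of $G^*({\bf c})$, its edge set, and the proposed face set $\mc F({\bf c}) = \{\partial_2^* \circ \partial_2^{\mc H}(\mc F(v)) \ | \ v \text{ colored with } {\bf c}\}$. Concretely I must check: (i) each proposed ``face'' $\partial_2^* \circ \partial_2^{\mc H}(\mc F(v))$ is genuinely a face, \emph{i.e.} a set of edges bounding a disk region; (ii) these faces, together with the vertices and edges of $G^*({\bf c})$, actually tile a closed surface, with each edge lying on exactly two faces; (iii) no loops or multiple edges appear. The key already-established fact is Lemma~\ref{lemma:images}: for a vertex $v$ of color ${\bf c}$ with cyclically ordered neighbors $v_1, \dots, v_m$ in $G^*$, the cycle $\partial_2^* \circ \partial_2^{\mc H}(\mc F(v)) = \{v_1, v_2\} + \dots + \{v_m, v_1\}$ is a \emph{monochromatic} cycle, all of whose edges have color ${\bf c}$; hence it indeed lies inside $G^*({\bf c})$, and it is a closed cycle of length $m$ surrounding $v$. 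This cycle is exactly the boundary of the disk obtained by taking the union of the $m$ triangular faces of $G^*$ incident to $v$ (the ``star'' of $v$), so it bounds a disk and qualifies as a face.

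First I would set up the geometric picture: since $G^*$ is a triangulation of a closed surface, the star of each vertex $v$ — the union of all faces incident to $v$ — is a closed disk $D_v$ whose boundary is the link cycle $v_1 v_2 \cdots v_m v_1$. For $v$ of color ${\bf c}$, Lemma~\ref{lemma:images} tells us this link cycle consists entirely of ${\bf c}$-colored edges, so $\partial D_v$ is a closed walk in $G^*({\bf c})$. I would then argue that the disks $D_v$, as $v$ ranges over the ${\bf c}$-colored vertices, cover the whole surface: every triangular face $f = \{\{u_R,u_G\},\{u_G,u_B\},\{u_B,u_R\}\}$ of $G^*$ has exactly one vertex of color ${\bf c}$, namely $u_{\bf c}$, and therefore lies in exactly one disk $D_{u_{\bf c}}$. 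Hence the $D_v$ partition the surface (up to shared boundaries), which shows that the proposed faces of $G^*({\bf c})$ tile the surface. Moreover, an edge $\{u_R, u_G\}$ colored $B$ lies on exactly the two triangles of $G^*$ containing it (a triangulation of a surface without boundary has each edge on exactly two faces); if that edge has color ${\bf c}$, then $u_R, u_G \neq u_{\bf c}$-colored, and the two triangles flanking it contribute it to the link cycles of the two ${\bf c}$-colored vertices opposite it — so each edge of $G^*({\bf c})$ lies on exactly two of the new faces, as required for a tiling.

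Next I would deal with connectivity and the absence of loops and multiple edges. Loops are impossible because $G^*$ has none and $G^*({\bf c})$ is a subgraph; multiple edges are excluded for the same reason, and we are told in Section~\ref{section:color} to assume this degeneracy does not occur. Connectivity of the resulting 2-complex on its own connected components is automatic since a tiling need only be taken component by component, and in fact one can note that the surface we tile is (a subsurface of, possibly all of) the original connected surface. Finally I would observe that the embedding of $G^*({\bf c})$ inherited from the embedding of $G^*$ is without edge crossings, since it is literally a sub-drawing, and the faces $D_v$ are embedded disks with disjoint interiors by construction. Assembling these points gives that $(V(G^*({\bf c})), E(G^*({\bf c})), \mc F({\bf c}))$ satisfies all the conditions in the definition of a tiling of surface.

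The main obstacle I anticipate is the bookkeeping in step (ii): making precise that the link disks $D_v$ for the ${\bf c}$-colored vertices fit together without overlap into a closed surface, and in particular that each such disk's boundary is a \emph{simple} cycle (not merely a closed walk), which is needed for it to be a legitimate face. This amounts to checking that no two neighbors $v_i, v_j$ of $v$ coincide and that the link of $v$ is a single cycle rather than several — which holds precisely because $G^*$ is a genuine triangulation of a surface (so every vertex link is a circle) and has no multiple edges. I would make this explicit and also double-check the edge-count identity (each ${\bf c}$-colored edge on exactly two ${\bf c}$-faces) by a direct local argument at an edge, since that is the crux of the ``tiling'' property. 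Everything else — no loops, no multiedges, planarity of the local drawing — is routine given the standing assumptions on $G$ and $G^*$.
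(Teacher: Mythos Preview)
Your proposal is correct and follows essentially the same approach as the paper: both arguments show that the stars $D_v$ of the ${\bf c}$-colored vertices partition the triangular faces of $G^*$ (the paper phrases this as ``a face of $G^*$ appears in exactly one hyperface $\mc F(v)$ with $v$ colored ${\bf c}$''), so that $G^*({\bf c})$ is obtained by gluing faces of $G^*$ and inherits its tiling structure. Your version is more explicit---you spell out the edge-on-two-faces check and the simplicity of the link cycle, which the paper leaves implicit in the word ``gluing''---but the underlying idea is identical.
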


\begin{proof}
We will prove that $G^*({\bf c})$ is constructed from the tiling of surface $G^*$ by gluing faces of $G^*$. Therefore, it inherits of the structure of tiling of surface of $G^*$.

This graph is connected since $G^*$ is supposed to be connected.
From Lemma~\ref{lemma:images}, the vectors $\partial_2^* \circ \partial_2^{\mc H}(\mc F(v))$ are well cycles.
Now, we have to prove that these cycles of the graph $G^*({\bf c})$ are obtained by gluing the faces of the tiling $G^*$. By definition, a hyperface $\mc F(v)$, is a union of $deg(v)$ disjoint faces of $G^*$ (hyperedges), and a face of $G^*$ appears in exactly one hyperface of the form $\mc F(v)$ such that $v$ is colored with ${\bf c}$.
This concludes the proof.
\end{proof}

In the definition of graphs and tilings of surfaces in Section~\ref{section:tilings}, we assumed that the graph and its dual contain neither loop nor multiple edges. To avoid these configurations in the graph $G^*({\bf c})$ and its dual, it suffices to assume that the length of the shortest cycle which is not a boundary is at least 5.

\begin{figure}[htbp]
\centering
\includegraphics{colored_tiling_dual.pdf}
\includegraphics{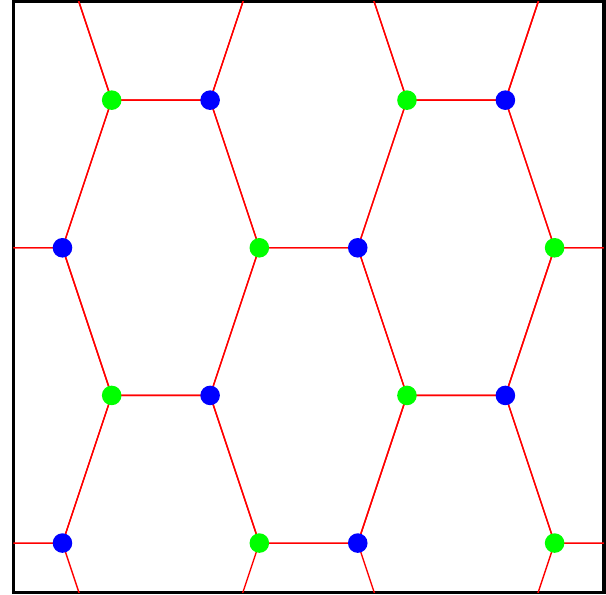}
  \caption{A triangular tiling of the torus $G^*$ with 3-colored vertices and its subtiling $G^*(R)$ induced by the red edges.}
  \label{fig:colored_tiling_dual_red}
\end{figure}

\vspace{.2cm}
From such a tiling $G^*(\bf c)$, we introduce the cellular homology complex $\mc C(G^*(\bf c))$:
$$
C_2(G^*({\bf c})) \overset{\partial_2^{G^*({\bf c})}}{\longrightarrow} C_1(G^*({\bf c})) \overset{\partial_1^{G^*({\bf c})}}{\longrightarrow} C_0(G^*({\bf c})).
$$
This defines the three surface codes associated with a color code. The next part introduces the projection onto these surface codes.

%

\subsection{The projection onto the three surface codes} \label{subsection:projection}

Our purpose, in this section, is to introduce a projection from color codes to surface codes and to show that it defines a morphism of 2-complex.
This proves that the projection transfer the whole structure of the color code to the surface code.

%
%
%

A natural way to define an application from the hypergraph complex to the graph $G^*(\bf c)$ is to send the hypergraph onto the graph $G^*$ and then to restrict ourself to the subgraph $G^*(\bf c)$. For example, to define the first projection which sends $C_1(\mc H)$ onto the space $C_1(G^*(\bf c))$, we send a hyperedge $e \in \mc E$ onto its boundary $\partial_2(e) \in C_1(G^*)$, which is a set of edges of the graph $G^*$. Then, we restrict $\partial_2(e)$ to the edges of $G^*(\bf c)$, that is to say, to the edges colored with $\bf c$. This leads to the following definition.

\begin{defi} \label{defi:projection}
The \emph{projection onto the graph $G^*({\bf c})$} is the triple of operators $\pi_{\bf c} = (\pi_{\bf c}^0, \pi_{\bf c}^1, \pi_{\bf c}^2)$ such that
$$
\pi_{\bf c}^i: C_i(\mc H) \longrightarrow C_i(G^*(\bf c))
$$
and
\begin{eqnarray*}
& \forall v \in \mc V, \quad 
& \pi_{\bf c}^0(v) =
\begin{cases}
v \text{ if } color(v) \neq {\bf c}\\
0 \text{ if } color(v) = {\bf c}\\
\end{cases}
\\
\\
& \forall e = \{v_R, v_G, v_B\} \in \mc E, \quad
& \pi_{\bf c}^1(e) = \{v_{\bf c'}, v_{\bf c''}\},
\text{ with } {\bf c'}, {\bf c''} \neq {\bf c}
\\
\\
& \forall f = \mc F(v) \in \mc F, \quad
& \pi_{\bf c}^2(f) =
\begin{cases}
\partial_2^* \circ \partial_2^{\mc H}(f) \text{ if } color(v) = {\bf c}\\
0 \text{ if } color(v) \neq {\bf c}\\
\end{cases}
\end{eqnarray*}
\end{defi}

The motivation behind this definition of the projection is the following theorem which proves that the projection conserves the structure of the 2-complexes:
\begin{theo} \label{theo:commutativity_projection}
The projection $\pi_{\bf c}$ is a morphism of 2-complexes from $\mc C(\mc H)$ to $\mc C(G^*(\bf c))$, \emph{i.e.} the following diagram is commutative:
$$
\begin{CD}
C_2(\mc H) @> \partial_2^{\mc H} >> C_1(\mc H) @> \partial_1^{\mc H} >> C_1(\mc H)\\
@VV \pi_{\bf c}^2 V @VV \pi_{\bf c}^1 V @VV \pi_{\bf c}^0 V\\
C_2(G^*({\bf c})) @> \partial_2^{G^*({\bf c})} >> C_1(G^*({\bf c})) @> \partial_1^{G^*({\bf c})} >> C_1(G^*({\bf c}))\\
\end{CD}
$$
\end{theo}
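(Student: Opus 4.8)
The plan is to verify the two commuting squares of the diagram separately, and in each case to reduce to an evaluation on canonical basis vectors using the $\F_2$-linearity of all the maps involved. Concretely, it suffices to prove $\partial_1^{G^*({\bf c})}\circ\pi_{\bf c}^1 = \pi_{\bf c}^0\circ\partial_1^{\mc H}$ on a single hyperedge $e\in\mc E$, and $\partial_2^{G^*({\bf c})}\circ\pi_{\bf c}^2 = \pi_{\bf c}^1\circ\partial_2^{\mc H}$ on a single hyperface $f=\mc F(v)\in\mc F$.

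The right-hand square should require essentially no computation. Writing $e=\{v_R,v_G,v_B\}$ with $v_{\bf d}$ of color ${\bf d}$, Equation~(\ref{eqn:partial_1_H}) gives $\partial_1^{\mc H}(e)=v_R+v_G+v_B$, and $\pi_{\bf c}^0$ deletes the summand of color ${\bf c}$, leaving $v_{\bf c'}+v_{\bf c''}$ with ${\bf c'},{\bf c''}\neq{\bf c}$. Going the other way around, $\pi_{\bf c}^1(e)=\{v_{\bf c'},v_{\bf c''}\}$ is an edge of $G^*({\bf c})$, and $\partial_1^{G^*({\bf c})}$ sends it to its pair of endpoints $v_{\bf c'}+v_{\bf c''}$; the two results coincide.

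For the left-hand square I would distinguish two cases according to $color(v)$. If $color(v)={\bf c}$, then by Definition~\ref{defi:projection} $\pi_{\bf c}^2(f)=\partial_2^*\circ\partial_2^{\mc H}(f)$, which by Lemma~\ref{lemma:images} and Proposition~\ref{prop:color_subtiling} can be identified with the face of the tiling $G^*({\bf c})$ bounded by the monochromatic cycle $\{v_1,v_2\}+\dots+\{v_m,v_1\}$, where $e_i=\{v,v_i,v_{i+1}\}$; applying $\partial_2^{G^*({\bf c})}$ to this face returns that cycle of edges. On the other side, $\partial_2^{\mc H}(f)=\sum_i e_i$ by Equation~(\ref{eqn:partial_2_H}), and because $v$ has color ${\bf c}$, the two vertices of $e_i$ not of color ${\bf c}$ are precisely $v_i$ and $v_{i+1}$, so $\pi_{\bf c}^1(e_i)=\{v_i,v_{i+1}\}$ and $\sum_i\pi_{\bf c}^1(e_i)$ is the same cycle. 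If instead $color(v)={\bf c}'\neq{\bf c}$, then $\pi_{\bf c}^2(f)=0$, and one has to show $\pi_{\bf c}^1(\partial_2^{\mc H}(f))=0$: here the neighbours $v_1,\dots,v_m$ of $v$ in $G^*$ form a cycle using only the two colors ${\bf c}$ and ${\bf c}''$, alternating around it (so $m$ is even), and for each $e_i=\{v,v_i,v_{i+1}\}$ exactly one of $v_i,v_{i+1}$ has color ${\bf c}''$; then $\pi_{\bf c}^1(e_i)$ is the edge joining $v$ to that neighbour, and the alternation makes each such edge appear exactly twice in $\sum_i\pi_{\bf c}^1(e_i)$, so the sum vanishes in $C_1(G^*({\bf c}))$.

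I expect the last case ($color(v)\neq{\bf c}$) to be the only genuinely non-mechanical point: it is the place where one must invoke the combinatorial structure of the triangulation $G^*$ — namely that the link of a vertex is an even cycle properly $2$-colored by the two colors distinct from $color(v)$ — in order to force the cancellation. Everything else is a direct unwinding of Definition~\ref{defi:projection}, Equations~(\ref{eqn:partial_1_H})--(\ref{eqn:partial_2_H}) and Lemma~\ref{lemma:images}.
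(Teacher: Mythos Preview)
Your argument is correct and follows the same strategy as the paper: reduce by $\F_2$-linearity to basis elements, then compute both compositions. For the right-hand square your computation is identical to the paper's.

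The paper, however, only writes out the right-hand square and dismisses the left-hand one with ``the second equality can be proved similarly.'' You go further and actually carry it out, splitting into the two cases $color(v)={\bf c}$ and $color(v)\neq{\bf c}$. Your observation that the off-color case is the only place requiring a genuine combinatorial input --- the alternating $2$-coloring of the link of $v$ forcing each edge $\{v,v_j\}$ with $color(v_j)={\bf c''}$ to appear exactly twice and hence cancel --- is well taken; this is not quite as ``similar'' to the right-hand square as the paper suggests. So your proof is essentially the paper's intended argument, but with the omitted half worked out in full.
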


\vspace{.2cm}
More precisely, we have
$\partial_2^{G^*(\bf c)} \circ \pi_{\bf c}^2 = \pi_{\bf c}^1 \circ \partial_2^{\mc H}$ and
$\partial_1^{G^*(\bf c)} \circ \pi_{\bf c}^1 = \pi_{\bf c}^0 \circ \partial_1^{\mc H}$.

\begin{proof}
Let us prove that $\partial_1^{G^*(\bf c)} \circ \pi_{\bf c}^1(x) = \pi_{\bf c}^0 \circ \partial_1^{\mc H}(x)$, for all $x$ of $C_1(\mc H)$. Without loss of generality we can assume that ${\bf c} = R$.
By linearity, it is enough to prove the proposition when $x$ corresponds to a hyperedge $e$ of $\mc H$.
This hyperedge is a triple $e=\{v_R, v_G, v_B\}$ of vertices of the hypergraph $\mc H$, the colors of these vertices being indicated by their indices.
By Equation~(\ref{eqn:partial_1_H}), the image of $e$ under $\partial_1^{\mc H}$ is the sum $v_R+v_G+v_B \in C_0(\mc H)$. Then, the application of $\pi_{R}^0$ gives
$\pi_{R}^0 \circ \partial_1^{\mc H} (e) = v_G+v_B$.

Let us compute $\partial_1^{G^*(R)} \circ \pi_{R}^1 (e)$, for $e=\{v_R, v_G, v_B\}$. By Definition~\ref{defi:projection}, the vector $\pi_{R}^1 (e)$ is  $\{v_G,v_B\}$.
Applying $\partial_1^{G^*(R}$ to this vector, we obtain
$\partial_1^{G^*(R)} \circ \pi_{R}^1 (e) = \partial_1^{G^*(R)}(\{v_G,v_B\}) = v_G+v_B$.
This proves that $\partial_1^{G^*(\bf c)} \circ \pi_{\bf c}^1 = \pi_{\bf c}^0 \circ \partial_1^{\mc H}$. The second equality can be proved similarly.
\end{proof}

This Theorem is the key ingredient for our decoding algorithm of color codes. It allows us to transport the 2-complex structure, and therefore the quantum code structure, from the color code to the surface code. This algorithm is described in the next section.

\section{Decoding color codes}
\label{section:decoding}

In this section, we consider a color code, associated with a hypergraph complex $\mc C(\mc H)$.
From Table~\ref{tab:color_codes}, an error is a vector $x \in C_1(\mc H)$. Our goal is to recover $x$, from the knowledge of its syndrome $s=\partial_1^{\mc H}(x)$ and it is enough to identify $x$, up to the space $\im \partial_2^{\mc H}$ of stabilizers.
Our basic idea is to send this error $x$ on surface codes by using the projections introduced in Section~\ref{subsection:projection}. Then, we decode this projected error in the surface codes and we lift the result in the color code.
Our strategy is illustrated in Figure~\ref{fig:couleur_correction} where the different steps of our decoding algorithm are described in a hexagonal color code. We will refer to this example throughout this section.

\subsection{Decoding of the projection of the error}

The projection $\pi_{\bf c}$ transforms the hypergraph complex $\mc C(\mc H)$ in the chain complex $\mc C(G^*({\bf c}))$ introduced in Section~\ref{section:3_surfaces}.
This cellular homology complex defines a surface code.
Recall that, from Table~\ref{tab:surface_codes}, an error on this surface code is a vector $b_{\bf c} \in C_1(G^*({\bf c}))$, its syndrome is the vector $\partial_1^{G^*({\bf c})}(b_{\bf c})$ and the stabilizers are the vectors of $\im \partial_2^{G^*({\bf c})}$.

The projection $\pi_{\bf c}^1(x)$ can be regarded as an error vector on the surface code associated with $G^*(\bf c)$. The following proposition enables us to compute its syndrome in the surface code.

\begin{prop} \label{prop:syndrome_bc}
Let $x \in C_1(\mc H)$ be an error for a color code and let $s \in C_0(\mc H)$ be its syndrome. The projection $\pi_{\bf c}^1(x)$ of the error $x$, is an error for the surface code associated with $G^*({\bf c})$ and its syndrome is the projection $\pi_{\bf c}^0(s)$, of the syndrome of $x$.
\end{prop}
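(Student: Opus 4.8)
The statement asserts two things: first, that $\pi_{\bf c}^1(x)$ is a legitimate error vector for the surface code $\mc C(G^*(\bf c))$, and second, that its syndrome equals $\pi_{\bf c}^0(s)$. The first part is essentially a matter of definition: $\pi_{\bf c}^1$ maps $C_1(\mc H)$ into $C_1(G^*(\bf c))$ by Definition~\ref{defi:projection}, and by Table~\ref{tab:surface_codes} any vector of $C_1(G^*(\bf c))$ is an error for the corresponding surface code, so nothing more is needed there. The real content is the syndrome identity, and the plan is to obtain it directly from the commutativity established in Theorem~\ref{theo:commutativity_projection}.

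\textbf{Key steps.} Recall that the syndrome of the error $\pi_{\bf c}^1(x)$ on the surface code $G^*(\bf c)$ is by definition $\partial_1^{G^*(\bf c)}\bigl(\pi_{\bf c}^1(x)\bigr)$. By the second commutation relation of Theorem~\ref{theo:commutativity_projection}, namely $\partial_1^{G^*(\bf c)} \circ \pi_{\bf c}^1 = \pi_{\bf c}^0 \circ \partial_1^{\mc H}$, applied to $x$, we get
$$
\partial_1^{G^*(\bf c)}\bigl(\pi_{\bf c}^1(x)\bigr) = \pi_{\bf c}^0\bigl(\partial_1^{\mc H}(x)\bigr) = \pi_{\bf c}^0(s),
$$
where the last equality uses the hypothesis $s = \partial_1^{\mc H}(x)$. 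This is exactly the claimed identity, so the proposition follows. One should perhaps also remark that $\pi_{\bf c}^0(s)$ indeed lies in $C_0(G^*(\bf c))$, which holds because $C_0(G^*(\bf c))$ is spanned by the vertices of $G^*$ not colored $\bf c$ and $\pi_{\bf c}^0$ precisely kills the $\bf c$-colored vertices — but this is again immediate from Definition~\ref{defi:projection}.

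\textbf{Main obstacle.} There is essentially no obstacle: the proposition is a corollary of Theorem~\ref{theo:commutativity_projection}, unpacking what "error'' and "syndrome'' mean for the surface code via Table~\ref{tab:surface_codes}. The only point requiring a small amount of care is making sure the two notions of syndrome — the abstract one $s = \partial_1^{\mc H}(x)$ for the color code and the surface-code one $\partial_1^{G^*(\bf c)}(\pi_{\bf c}^1(x))$ — are matched up against the correct boundary maps, which is bookkeeping that the commuting square handles automatically. I would therefore present the proof as a two-line invocation of the theorem, preceded by the observation that membership of $\pi_{\bf c}^1(x)$ in $C_1(G^*(\bf c))$ makes it an error vector in the sense of Table~\ref{tab:surface_codes}.
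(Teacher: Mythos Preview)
Your proposal is correct and follows essentially the same route as the paper: both observe that $\pi_{\bf c}^1(x)\in C_1(G^*({\bf c}))$ makes it an error vector by Table~\ref{tab:surface_codes}, and then obtain the syndrome identity by applying the commutation relation $\partial_1^{G^*({\bf c})}\circ\pi_{\bf c}^1=\pi_{\bf c}^0\circ\partial_1^{\mc H}$ from Theorem~\ref{theo:commutativity_projection} to $x$ and substituting $s=\partial_1^{\mc H}(x)$.
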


\begin{proof}
From the description of the errors, given in Table~\ref{tab:surface_codes}, the projection $\pi_{\bf c}^1(x)$ is well an error for this surface codes.
Moreover, its syndrome is the vector $\partial_1^{G^*({\bf c})} (\pi_{\bf c}^1(x))$.
By Theorem~\ref{theo:commutativity_projection}, we have
$
\partial_1^{G^*({\bf c})} \circ \pi_{\bf c}^1(x) = \pi_{\bf c}^0 \circ \partial_1^{\mc H}(x).
$
Since the vector $\partial_1^{\mc H}(x)$ is the syndrome of the error $x$, we obtain $\partial_1^{G^*({\bf c})} \circ \pi_{\bf c}^1(x) = \pi_{\bf c}^0(s)$.
\end{proof}

Thanks to this proposition, we are able to compute the syndrome of a projection $\pi_{\bf c}^1(x)$, using only the measured syndrome $s$.
Then, we estimate the projection of the error by applying a surface code decoding algorithm in the tiling $G^*({\bf c})$.
This step is illustrated in Figure~\ref{fig:couleur_correction_restrictionB} and Figure~\ref{fig:couleur_correction_couplageB}.

\subsection{Lifting in the color codes}
\label{section:lifting}

The previous section provides a method to estimate the three projections $\pi_{\bf c}(x)$ of the error $x$ acting on the color code. Our purpose is now to recover the error in the color codes from this information.

First, let us recall a basic property of the chain space $C_1(G^*)$.
\begin{lemma} \label{lemma:C_1_decomposition}
$$
C_1(G^*) = \bigoplus_{\bf c} C_1(G^*(\bf c)).
$$
\end{lemma}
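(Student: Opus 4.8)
The plan is to show that the edge set $E^*$ of $G^*$ is partitioned into three classes according to color, which immediately induces the claimed direct sum decomposition of the chain space $C_1(G^*)$.

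First I would recall that, as set up in Section~\ref{section:3_surfaces}, the 3-coloration of the vertices of $G^*$ induces a coloration of the edges: an edge $\{u,v\}$ receives the unique color in $\{R,G,B\}$ absent from its two (necessarily distinctly colored) endpoints. This is well defined precisely because $u$ and $v$ are adjacent, hence carry two different colors, leaving exactly one color for the edge. Consequently the edge set $E^*$ is the disjoint union $E^* = E^*(R) \sqcup E^*(G) \sqcup E^*(B)$, where $E^*({\bf c})$ is the set of edges colored ${\bf c}$, which by Definition~\ref{defi:color_subgraph} is exactly the edge set of the subgraph $G^*({\bf c})$.

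Next I would pass to the chain spaces. By the power-set/formal-sum description recalled in Section~\ref{section:complexes}, $C_1(G^*) = \bigoplus_{e \in E^*} \F_2 e$, with the edges of $E^*$ forming a canonical basis. A partition of the basis into three blocks $E^*(R)$, $E^*(G)$, $E^*(B)$ yields a direct sum of the corresponding coordinate subspaces; since $C_1(G^*({\bf c})) = \bigoplus_{e \in E^*({\bf c})} \F_2 e$ is precisely the span of the block $E^*({\bf c})$, we get $C_1(G^*) = C_1(G^*(R)) \oplus C_1(G^*(G)) \oplus C_1(G^*(B))$ as claimed. Concretely, every $x \subset E^*$ decomposes uniquely as the disjoint union of $x \cap E^*(R)$, $x \cap E^*(G)$, $x \cap E^*(B)$.

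There is no real obstacle here; the only point that warrants a sentence is that the edge coloring is genuinely a function $E^* \to \{R,G,B\}$ (existence and uniqueness of the missing color), which relies on the standing hypothesis that the vertices of $G^*$ are 3-colored with no monochromatic edge. Everything else is the elementary fact that partitioning a basis partitions the space into an internal direct sum.
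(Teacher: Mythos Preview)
Your proof is correct and is exactly the natural argument: the edge coloring partitions the basis $E^*$ of $C_1(G^*)$ into the three blocks $E^*({\bf c})$, hence the space splits as the internal direct sum of the three subspaces $C_1(G^*({\bf c}))$. The paper itself states this lemma without proof, calling it a ``basic property'' of $C_1(G^*)$, so there is nothing to compare against; your write-up supplies precisely the missing justification.
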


Denote by $\tilde b_{\bf c}$ the estimation of the projection $\pi_{\bf c}(x)$, returned by the surface codes decoding algorithm.
The previous lemma allows us to construct the vector $\tilde b = \sum_{\bf c} \tilde b_{\bf c}$ of $C_1(G^*)$, from the three estimations of the projection.
This vector $\tilde b$ is represented in Figure~\ref{fig:couleur_correction_bord}.
Our goal is to determine $x$ from the knowledge of the vector $\tilde b$.

Our strategy is motivated by the following property of the error.
\begin{prop} \label{prop:b_cycle}
The vector $b = \sum_{\bf c} \pi_{\bf c}(x) \in C_1(G^*)$ is equal to the vector $ \partial_2^*(x)$.
\end{prop}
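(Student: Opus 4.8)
The plan is to reduce the identity $b = \partial_2^*(x)$ to a statement about a single hyperedge and then invoke the color decomposition of Lemma~\ref{lemma:C_1_decomposition}. By linearity of all three projections $\pi_{\bf c}^1$ and of $\partial_2^*$, it suffices to prove the proposition when $x$ corresponds to a single hyperedge $e = \{u_R, u_G, u_B\} \in \mc E$.

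For such an $e$, I would first compute the right-hand side using the first item of Lemma~\ref{lemma:images}: $\partial_2^*(e) = \{u_R, u_G\} + \{u_G, u_B\} + \{u_B, u_R\} \in C_1(G^*)$, and recall from that same lemma that the edge $\{u_{\bf c}, u_{\bf c'}\}$ carries the color ${\bf c''}$, the one absent from its endpoints. Next I would compute the left-hand side termwise: by Definition~\ref{defi:projection}, $\pi_{\bf c}^1(e) = \{u_{\bf c'}, u_{\bf c''}\}$ with ${\bf c'}, {\bf c''} \neq {\bf c}$, so $\pi_R^1(e) = \{u_G, u_B\}$, $\pi_G^1(e) = \{u_B, u_R\}$ and $\pi_B^1(e) = \{u_R, u_G\}$. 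Summing, $\sum_{\bf c} \pi_{\bf c}^1(e) = \{u_G, u_B\} + \{u_B, u_R\} + \{u_R, u_G\}$, which is exactly $\partial_2^*(e)$. This also shows the three summands lie in the respective subspaces $C_1(G^*(\bf c))$ — the edge $\{u_G, u_B\}$ is red, hence in $C_1(G^*(R))$, etc. — so the sum is consistent with the direct-sum decomposition $C_1(G^*) = \bigoplus_{\bf c} C_1(G^*(\bf c))$ of Lemma~\ref{lemma:C_1_decomposition}, and nothing is being double-counted or lost.

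Finally I would extend back to arbitrary $x \in C_1(\mc H)$ by linearity: writing $x = \sum_{e} x_e\, e$ over $\mc E$, both $b(x) = \sum_{\bf c}\pi_{\bf c}^1(x)$ and $\partial_2^*(x)$ are $\F_2$-linear in $x$, and they agree on the basis $\mc E$, hence everywhere. The only point that needs care — and the one I would state explicitly — is the identification $C_1(\mc H) = C_2(G^*)$ from Lemma~\ref{lemma:bijection}, which is what makes $\partial_2^*$ applicable to $x$ in the first place, together with the observation that the three color-edge classes genuinely partition the edge set of $G^*$ so that $\pi_{\bf c}^1$ picks out precisely the $\bf c$-colored edge of each triangle. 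I do not expect a serious obstacle here; the content is entirely bookkeeping of colors, and the one-hyperedge computation is the heart of it.
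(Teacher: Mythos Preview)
Your proof is correct. It takes a slightly different route from the paper's, though the underlying content is the same. The paper argues abstractly: it introduces the colour projections $p_{\bf c}: C_1(G^*) \to C_1(G^*({\bf c}))$ attached to the direct sum of Lemma~\ref{lemma:C_1_decomposition}, observes from Definition~\ref{defi:projection} that $\pi_{\bf c}^1 = p_{\bf c} \circ \partial_2^*$, and then uses $\sum_{\bf c} p_{\bf c} = \mathrm{id}$ to conclude $b = \sum_{\bf c} p_{\bf c}(\partial_2^*(x)) = \partial_2^*(x)$. Your argument instead verifies the identity directly on a basis hyperedge by computing both sides via Lemma~\ref{lemma:images} and Definition~\ref{defi:projection}, then extends by linearity. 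Your approach is more explicit and self-contained; the paper's is shorter and highlights the structural reason the identity holds (each $\pi_{\bf c}^1$ is literally the $\bf c$-coloured piece of $\partial_2^*$). Either way, the core observation is the same: $\pi_{\bf c}^1(e)$ picks out exactly the $\bf c$-coloured edge of the triangle $\partial_2^*(e)$, and the three colours partition that triangle.
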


\begin{proof}
To prove this proposition, let us introduce the projection
$p_{\bf c}: C_1(G^*) \rightarrow C_1(G^*(\bf c))$ corresponding to the direct sum decomposition of Lemma~\ref{lemma:C_1_decomposition}.
By definition~\ref{defi:projection}, we have $\partial_{\bf c}^1 = p_{\bf c} \circ \partial_2^*$. Therefore, the vector $b$ can be written $b = \sum_{\bf c} p_{\bf c} \circ \partial_2^*(x)$. Then remark that the sum $\sum_{\bf c} p_{\bf c}$ is the identity operator. This proves the proposition.
\end{proof}

In other words, Proposition~\ref{prop:b_cycle} proves that $b$ is a boundary in the tiling $G^*$. Our strategy is to fill the estimation $\tilde b$ of the vector $b$, when it is possible. This corresponds to Figure~\ref{fig:couleur_correction_remplissage}.
When the error has sufficiently low weight, the estimation $\tilde b$ is a boundary and its filling allows us to recover the original error $x$ for the color code, up to a stabilizer. This fact is proved in the next section.

\begin{figure}[htbp]
\centering

\subfloat[An error $x$ whose support is the set of grey hyperedges.
Its syndrome $s$ is composed of the vertices supporting a symbol~1.]
{\includegraphics[scale=.7]{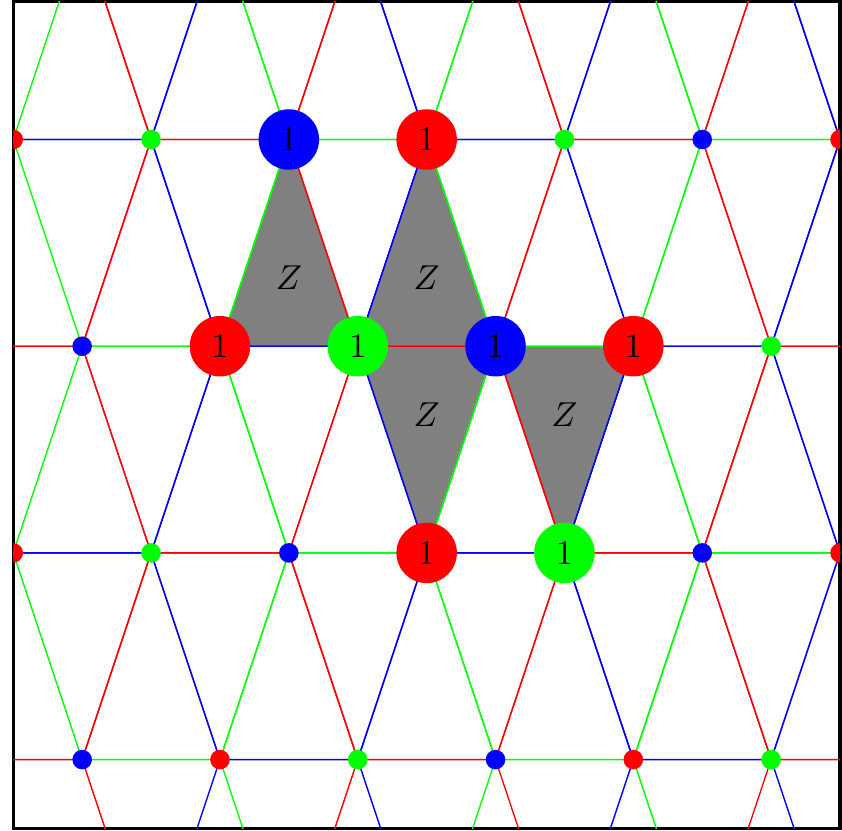}}
\hspace{.5cm}
\subfloat[We restrict the syndrome to the blue graph $G^*(B)$]
{\label{fig:couleur_correction_restrictionB}\includegraphics[scale=.7]{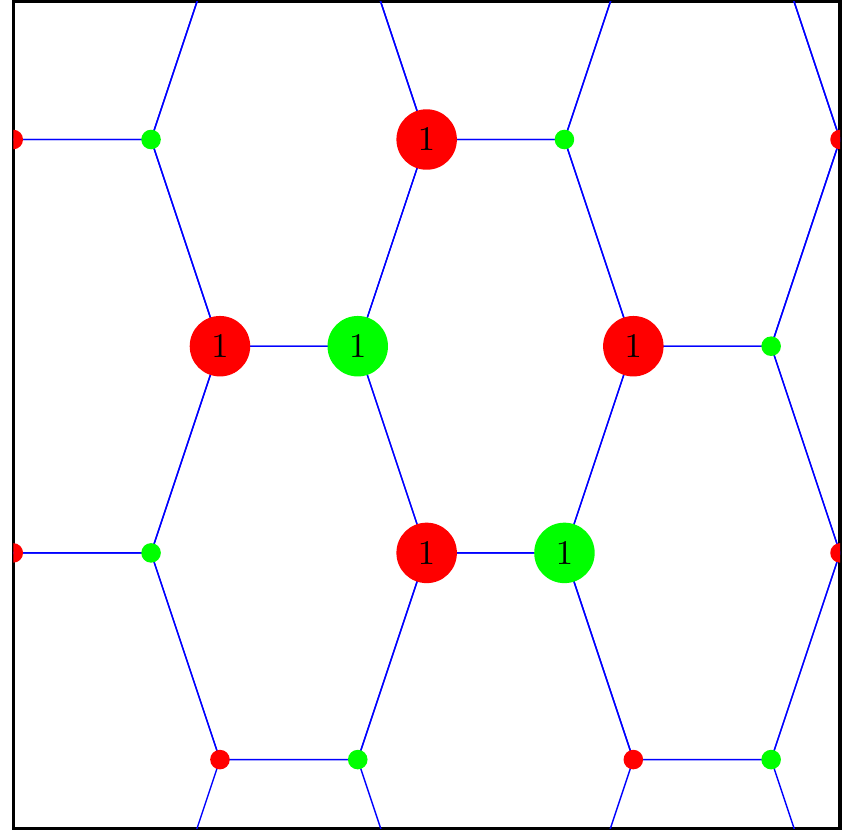}}

\subfloat[We decode the projection on the surface code $G^*(B)$. This defines a set $\tilde b_B$ of dotted blue edges.]
{\label{fig:couleur_correction_couplageB}\includegraphics[scale=.7]{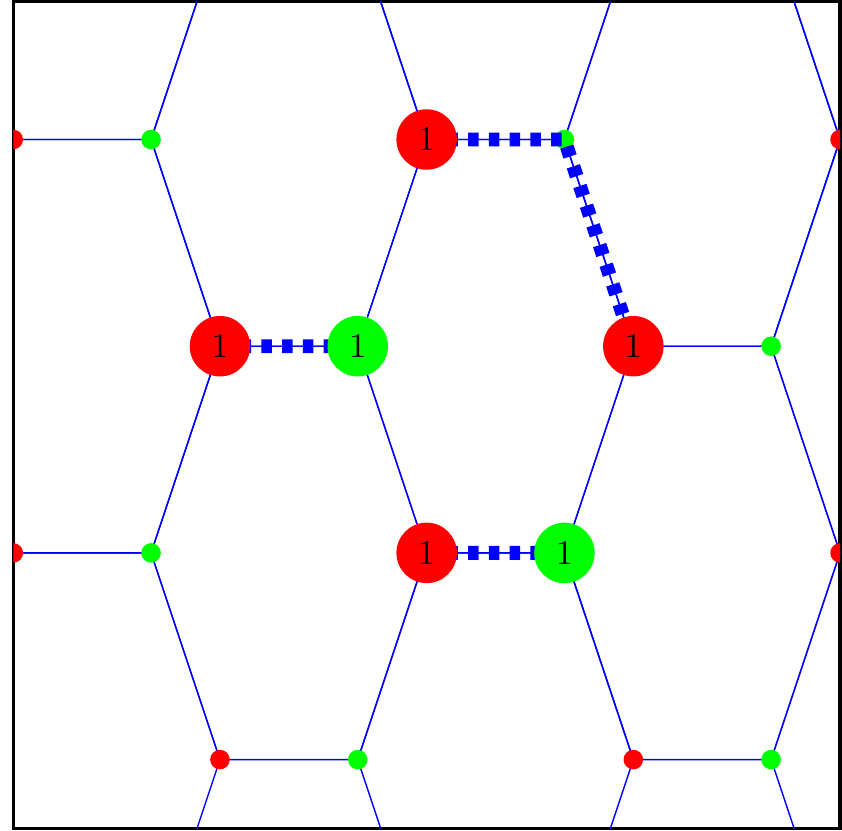}}
\hspace{.5cm}
\subfloat[We consider the union $\tilde b$ of the three sets $\tilde b_{\bf c}$ of edges computed in the three surface codes associated with graph $G^*({\bf c})$.]
{\label{fig:couleur_correction_bord}\includegraphics[scale=.7]{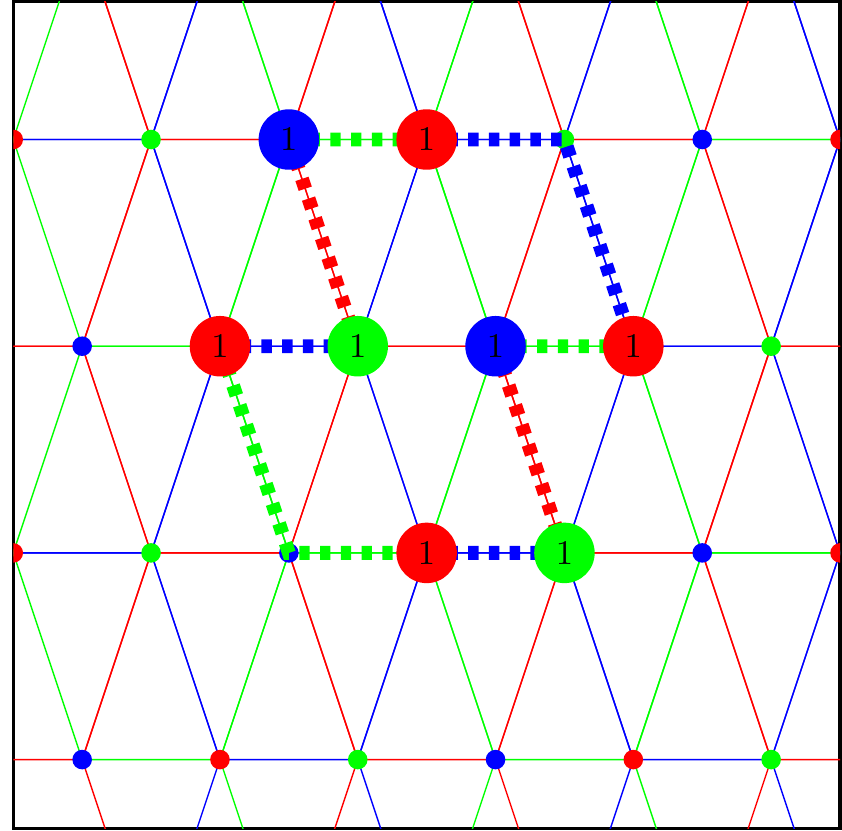}}

\subfloat[The estimation $\tilde x$ of the error is the set of grey edges obtained by filling the cycle~$\tilde b$.]
{\label{fig:couleur_correction_remplissage}\includegraphics[scale=.7]{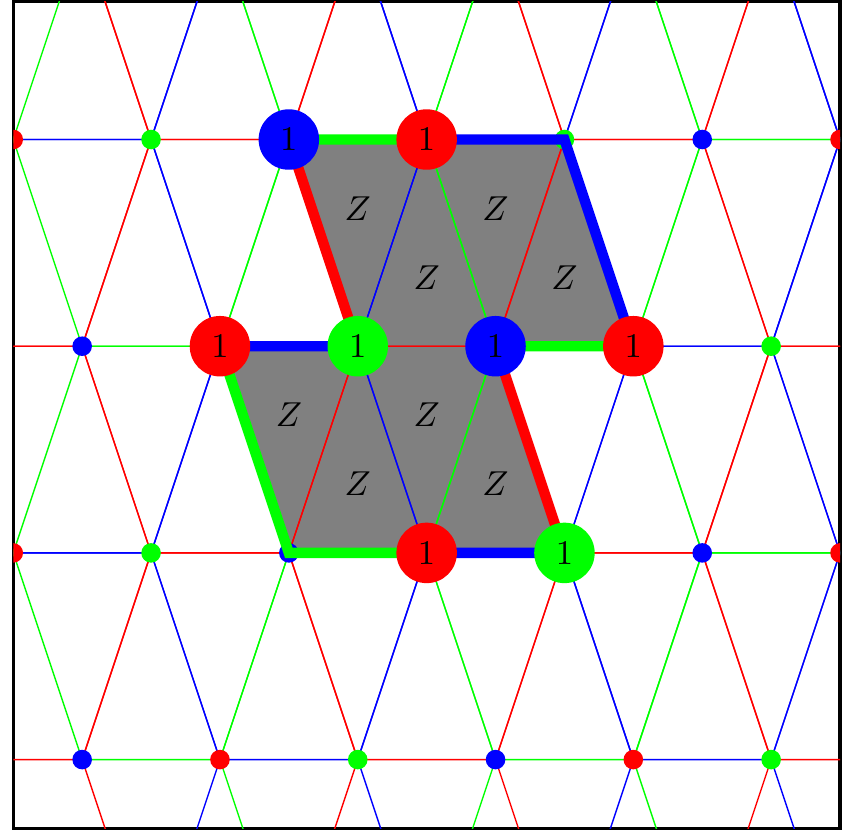}}
\hspace{.5cm}
\subfloat[The sum $x + \tilde x$ corresponds to the grey hyperedges. It is a sum of two hyperfaces therefore the vector $\tilde x$ is a good estimation of $x$.]
{\includegraphics[scale=.7]{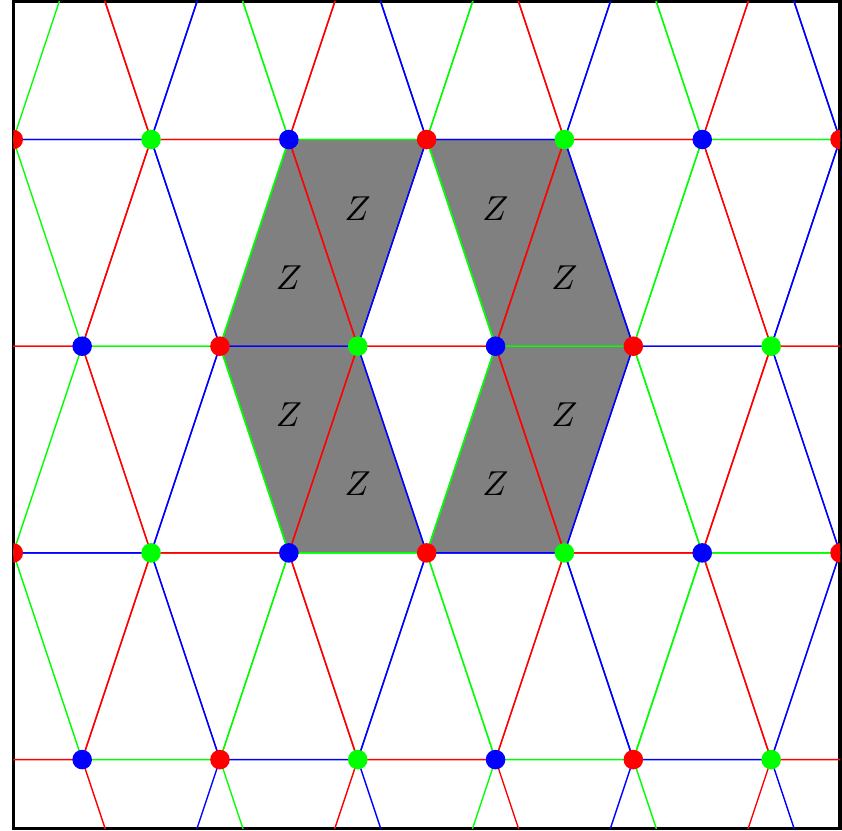}}

  \caption{An example of decoding of a color code}
\label{fig:couleur_correction}
\end{figure}

\begin{algorithm}
\caption{Decoding algorithm for color codes}
\label{algo:couleur}

\begin{algorithmic}[1]
\REQUIRE $s \in \im \dha$ the syndrome of the error $x$.
\ENSURE $\tilde x$ such that $\dha(\tilde x)=s$ and $\partial_2^*(\tilde x)$ of minimum weight or $NO LIFTING$.

\STATE Compute the projection $\pi_{\bf c}^0(s)$ onto $G^*({\bf c})$ for ${\bf c} \in \{R, G, B\}$.

\STATE Decode the surface code associated with $G^*({\bf c})$, using the syndrome $\pi_{\bf c}^0(s)$, for ${\bf c} \in \{R, G, B\}$. This returns the three vectors $\tilde b_{\bf c}$. Compute $\tilde b = \sum_{{\bf c} \in \{R, G, B\}} \tilde b_{\bf c}$.

\STATE Determine $\tilde x \in C_2(G^*)$ such that $\partial_2(\tilde x) = \tilde b$, by filing the boundary $\tilde b$ in the tiling $G^*$.
If $\tilde b$ cannot be filled return $NO LIFTING$.

\end{algorithmic}
\end{algorithm}

The complexity of Algorithm~\ref{algo:couleur}, which allows us to decode color codes is polynomial.
The most expensive step of this algorithm is the surface decoding algorithm, when we use the perfect matching algorithm.
Algorithm~\ref{algo:couleur} could be parallelized by combining the improved perfect matching decoding of Fowler \emph{et al.} \cite{FWH12}, with an approximation of the lifting of $\tilde b$.

\subsection{Performance of our decoding algorithm for color codes}
\label{section:perf}

The following theorem proves that our color codes decoding algorithm corrects an error for the color code when the three surface codes decoding algorithms correct the projected errors.

Let $x\in C_1(\mc H)$ be an error for a color code and let $\pi_{\bf c}(x)$ be its surface code projection.
Recall that a surface code decoding algorithm allows us to correct an error $\pi_{\bf c}^1(x)$, iff it returns a vector $\tilde b_{\bf c}$, which is a vector of the coset of $\pi_{\bf c}^1(x)$, modulo the subspace $\im \partial_2^{G^*({\bf c})}$.
Similarly, we say that a color codes decoding algorithm corrects an error $x \in C_1(\mc H)$ iff it finds a vector $\tilde x \in C_1(\mc H)$, which is equivalent to $x$, modulo the subspace $\im \dha$.

\begin{theo} \label{theo:perf_algo2}
If the surface code decoding algorithm corrects the projection $\pi_{\bf c}^1(x)$ of the error $x$ for the three colors ${\bf c} = R, G$ and $B$, then Algorithm~\ref{algo:couleur} corrects the error $x$.
\end{theo}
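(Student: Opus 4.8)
The plan is to track, through the three steps of Algorithm~\ref{algo:couleur}, the hypothesis that each surface decoder corrects the projection, and to show that the lifted vector $\tilde x \in C_2(G^*)$ produced in Step~3 has a boundary $\partial_2(\tilde x) = \tilde b$ that differs from the ``true'' boundary $b = \sum_{\bf c}\pi_{\bf c}^1(x) = \partial_2^*(x)$ (Proposition~\ref{prop:b_cycle}) only by a sum of surface-code stabilizers, one in each of the three tilings $G^*({\bf c})$. First I would record what the hypothesis gives: for each ${\bf c}$ the decoder returns $\tilde b_{\bf c} = \pi_{\bf c}^1(x) + \partial_2^{G^*({\bf c})}(\sigma_{\bf c})$ for some $\sigma_{\bf c} \in C_2(G^*({\bf c}))$. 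Summing over the three colors and using Lemma~\ref{lemma:C_1_decomposition} (so that the three pieces live in complementary summands of $C_1(G^*)$), we get $\tilde b = b + \sum_{\bf c}\partial_2^{G^*({\bf c})}(\sigma_{\bf c})$. By Proposition~\ref{prop:color_subtiling} each face of $G^*({\bf c})$ is a glued union of faces of $G^*$, so $\partial_2^{G^*({\bf c})}(\sigma_{\bf c})$ is itself a boundary in $G^*$, i.e. equals $\partial_2(\tau_{\bf c})$ for a suitable $\tau_{\bf c}\in C_2(G^*)$; hence $\tilde b = \partial_2^*(x) + \partial_2\bigl(\sum_{\bf c}\tau_{\bf c}\bigr)$ is a boundary in $G^*$. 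In particular Step~3 does not return $NO\ LIFTING$, and the lift $\tilde x$ satisfies $\partial_2(\tilde x) = \tilde b = \partial_2(x + \sum_{\bf c}\tau_{\bf c})$ (here I am writing $x$ for its image in $C_2(G^*) = C_1(\mc H)$ under the identification of Lemma~\ref{lemma:bijection}).

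Next I would convert this equality of boundaries into the statement the theorem actually wants, namely $\tilde x \equiv x \pmod{\im\dha}$, which requires controlling the kernel of $\partial_2$ on $G^*$. Since $\partial_2(\tilde x + x + \sum_{\bf c}\tau_{\bf c}) = 0$ in $C_1(G^*)$, the vector $\tilde x + x + \sum_{\bf c}\tau_{\bf c}$ is a $2$-cycle of the cellular complex of $G^*$; for a tiling of a surface of genus sufficiently small (or with the girth/non-boundary-cycle hypothesis made after Proposition~\ref{prop:color_subtiling}), $\Ker\partial_2(G^*)$ is spanned by the fundamental classes of the connected components of the surface, i.e. by the all-faces vector(s). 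The key point is then that these $2$-cycles of $G^*$, as well as each $\tau_{\bf c}$, are images under $\dhb$ of hyperfaces of $\mc H$: indeed by the very construction of $G^*({\bf c})$, the face of $G^*({\bf c})$ attached to a vertex $v$ of color ${\bf c}$ is $\partial_2^*(\dhb(\mc F(v)))$, so that any surface-code stabilizer $\partial_2^{G^*({\bf c})}(\sigma_{\bf c})$ equals $\partial_2^*\bigl(\dhb(w_{\bf c})\bigr)$ for the corresponding $w_{\bf c}\in C_2(\mc H)$, and the all-faces $2$-cycle of $G^*$ is $\partial_2^*$ applied to $\dhb$ of the all-hyperfaces vector. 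I would therefore rewrite $\tilde b = \partial_2^*\bigl(\dhb(w) + \text{(something in }\Ker\partial_2^{\mc H}\text{)}\bigr)$ and conclude, using injectivity of $\partial_2^*$ restricted to the relevant subspace (this is exactly where Lemma~\ref{lemma:bijection}'s identification $C_1(\mc H) = C_2(G^*)$ makes $\partial_2^*$ behave like the identity, so no injectivity is actually lost), that $\tilde x = x + \dhb(w) + z$ with $z \in \Ker\dha \cap \Ker(\partial_2^*\!\!\restriction)$; one then checks $z$ is itself a hyperface-boundary, giving $\tilde x \equiv x \pmod{\im\dhb}$.

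I would organize the write-up as: (i) unwind the surface-decoder hypothesis into the displayed expression for $\tilde b$; (ii) invoke Proposition~\ref{prop:color_subtiling} to see $\tilde b$ is a boundary of $G^*$, so Step~3 succeeds; (iii) use Proposition~\ref{prop:b_cycle} plus the identification $C_1(\mc H) = C_2(G^*)$ to compare $\tilde x$ with $x$ modulo $\Ker\partial_2(G^*)$; (iv) push everything back through $\partial_2^*$ and $\dhb$ to land in $\im\dha^{\mc H}$-equivalence. The main obstacle I expect is step (iii)–(iv): one has to argue carefully that the ``slack'' picked up — the surface-code stabilizers $\partial_2^{G^*({\bf c})}(\sigma_{\bf c})$ and the genuine $2$-cycles of $G^*$ — all come from genuine color-code stabilizers $\im\dhb$, and this is precisely what the construction of the $G^*({\bf c})$ as gluings of hyperfaces (Proposition~\ref{prop:color_subtiling}) and the girth assumption are there to guarantee; being sloppy here would ``prove'' too much, since a stabilizer on a single surface code need not lift to a color-code stabilizer unless it arises from a monochromatic hyperface cycle. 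Everything else is a bookkeeping exercise using Theorem~\ref{theo:commutativity_projection}, Lemma~\ref{lemma:bijection} and Lemma~\ref{lemma:C_1_decomposition}.
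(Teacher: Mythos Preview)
Your plan is essentially the paper's own argument, unpacked concretely rather than diagrammatically. The paper makes a linearity reduction to the case $\pi_{\bf c}^1(x)=0$ and $\tilde b_{\bf c}\in\im\partial_2^{G^*({\bf c})}$, then observes that $\pi^2:C_2(\mc H)\to\prod_{\bf c}C_2(G^*({\bf c}))$ is an isomorphism (exactly your observation that each face of $G^*({\bf c})$ is $\partial_2^*\circ\dhb(\mc F(v))$ for a unique hyperface), uses the commutative square from Theorem~\ref{theo:commutativity_projection} to produce a preimage of $(\tilde b_R,\tilde b_G,\tilde b_B)$ lying in $\im\dhb$, and finishes by computing $\ker\pi^1=\{0,\sum_{e\in\mc E}e\}$ and checking $\sum_e e\in\im\dhb$. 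Your steps (i)--(iv) are the same ingredients in the same order, with $\ker\pi^1$ replaced by the equivalent $\Ker\partial_2^*$ under the identification $C_1(\mc H)=C_2(G^*)$.

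A few slips to clean up. Your worry about genus is misplaced: for \emph{any} connected closed surface one has $H_2(\,\cdot\,;\F_2)=\F_2$, so $\Ker\partial_2^*=\{0,\text{all faces}\}$ with no hypothesis beyond connectedness; the girth condition after Proposition~\ref{prop:color_subtiling} is there only to avoid loops and multi-edges, not for this step. Your sentence ``the all-faces $2$-cycle of $G^*$ is $\partial_2^*$ applied to $\dhb$ of the all-hyperfaces vector'' drops the $\partial_2^*$: the all-faces vector lives in $C_2(G^*)=C_1(\mc H)$ and equals $\dhb$ applied to the sum of all hyperfaces of a single colour (or to all hyperfaces, since each hyperedge lies in three hyperfaces). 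The displayed expression $\tilde b=\partial_2^*\bigl(\dhb(w)+\text{(something in }\Ker\dhb)\bigr)$ is garbled; what you actually have is $\tilde b=\partial_2^*(x+\dhb(w))$, whence $\tilde x-x-\dhb(w)\in\Ker\partial_2^*\subset\im\dhb$, and you are done. Finally, in several places you write $\im\dha$ where you mean $\im\dhb$ (the paper itself has this typo once in the paragraph before the theorem).
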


\begin{proof}
Assume that $\tilde b_{\bf c}$ is a vector of the coset of $\pi_{\bf c}(x)$ modulo the subspace $\im \partial_2^{G^*({\bf c})}$ for the three colors ${\bf c}$. We must prove that the vector $\tilde b = \sum_{\bf c} \tilde b_{\bf c}$ of $C_1(G^*)$ belongs to the space $\im \partial_2^*$ and that all its premiages $\tilde x$ are in the coset of $x$ modulo $\im \partial_2^{\mc H}$.
By linearity, it is enough to show that, if $\tilde b_{\bf c} \in \im \partial_2^{G^*({\bf c})}$ for all $\bf c$, then $\tilde b \in \im \partial_2^*$ and its preimages under $\partial_2^*$ belong to the space $\im \partial_2^{\mc H}$.

Denote by $\pi^i$ the linear application
$$
\pi^i : C_i(\mc H) \longrightarrow \prod_{\bf c} C_i(G^*({\bf c})),
$$
which maps $x$ to the triple $(\pi_R^i(x), \pi_G^i(x), \pi_B^i(x))$.
From Theorem~\ref{theo:commutativity_projection}, the following diagram is commutative:

\vspace{.2cm}
$$
\begin{CD}
C_2(\mc H) @> \partial_2^{\mc H} >> C_1(\mc H)\\
@VV \pi^2 V @VV \pi^1 V\\
\prod_{\bf c} C_2(G^*({\bf c})) @> \prod_{\bf c} \partial_2^{G^*({\bf c})} >> \prod_{\bf c} C_1(G^*({\bf c}))\\
\end{CD}
$$
\vspace{.2cm}

The set $\prod_{\bf c} C_1(G^*({\bf c}))$ is in one-to-one correspondence with the direct sum $\bigoplus_{\bf c} C_1(G^*(\bf c))$. Then, By Lemma~\ref{lemma:C_1_decomposition}, it is also in one-to-one correspondence with the space $C_1(G^*)$.
Denote by $\phi: C_1(G^*) \rightarrow C_1(G^*({\bf c}))$ this bijection. Then, the application $\phi \circ \partial_2^*$ coincides with the application $\partial_1$.
Thus, it is sufficient to show that, if we have $\tilde b_{\bf c} \in \im \partial_2^{G^*(\bf c)}$ for all $\bf c$, then the triple $(\tilde b_R, \tilde b_G, \tilde b_B)$ admits a preimage under $\pi^1$, and that the preimages of this triple are included in the space $\im \partial_2^{\mc H}$.

Therefore, we assume that $\tilde b_{\bf c} \in \im \partial_2^{G^*(\bf c)}$, for all $\bf c$. For each color $\bf c$, there exists a vector $\tilde \beta_{\bf c} \in C_2(G^*(\bf c))$ such that $\partial_2^{G^*({\bf c})}(\tilde \beta_{\bf c}) = \tilde b_{\bf c}$.
The triple $(\tilde \beta_R, \tilde \beta_G, \tilde \beta_B)$ is then a preimage of $(\tilde b_R, \tilde b_G, \tilde b_B)$ under the application $\prod_{\bf c} \partial_2^{G^*({\bf c})}$.
To lift the vector $(\tilde \beta_R, \tilde \beta_G, \tilde \beta_B)$ in the space $C_2(\mc H)$, it suffices to remark that the application $\pi^2$ is an isomorphism. Therefore, there exists a vector $\alpha \in C_2(\mc H)$ such that
$$
\left( \prod_{\bf c} \partial_2^{G^*({\bf c})} \right) \circ \pi^2 (\alpha) = (\tilde b_R, \tilde b_G, \tilde b_B).
$$
By commutativity of the diagram drawn below, we have
$
\pi^1 \circ \partial_2^{\mc H} (\alpha) =  (\tilde b_R, \tilde b_G, \tilde b_B).
$
This proves that $\partial_2^{\mc H} (\alpha)$ is a preimage of $(\tilde b_R, \tilde b_G, \tilde b_B)$, under the application $\pi^1$.
Therefore, the triple $(\tilde b_R, \tilde b_G, \tilde b_B)$ is in the set $\im \partial_2^*$ and its preimage is clearly in the image of $\partial_2^{\mc H}$.

To conclude, let us prove that the other preimages of the vector $\tilde b$ are also included in the space $\im\partial_2^{\mc H}$. First, consider the kernel of the linear application $\pi^1$. If a set $x$ of hyperedges satisfies $\pi^1(x) = 0$, then it corresponds to a set of faces of the graph $G^*$, which has an empty boundary. By connexity of the tiling $G$, it is either zero or the set of all the faces of $G^*$. Thus, we have
$
\ker \pi^1 =\{ 0, \sum_{e \in \mc E} e \}.
$
Consequently, the vector $\tilde b$ has two preimages and that their difference is the vector $\sum_{e \in \mc E} e$. Since this sum is in the image of the application $\partial_2^{\mc H}$, the second preimages of $\tilde b$ is also included in the space $\im \partial_2^{\mc H}$.
\end{proof}

For numerical simulations, we consider the hexagonal color codes studied by Sarvepalli and Raussendorf in 2012 \cite{SR12}. Let us recall the definition of these color codes.
Denote by $H_r$ the Cayley graph of the group $\Z/(3r)\Z \times \Z/(3r)\Z$ and the generating set:
$$
\{ \pm (1,0), \pm (0, 1), \pm (1,-1) \}.
$$
By definition, it is the graph whose vertices are the elements of the group and two vertices are joined by an edge iff they differ in an element of the generating set.
This graph can be naturally embedded in the torus. It is endowed with triangular faces and its vertices can be 3-colored.
The dual tiling of $H_r$ is a trivalent hexagonal tiling, equipped with 3-colored faces. It defines a color code of parameters $[[18.r^2, 4,  4r]]$.

The performance of Algorithm~\ref{algo:couleur} for the color codes based on $H_r$, with $r=2^m$, is represented in Figure~\ref{fig:courbe_color_hexagonal}. We observe an error threshold close to $8.7\%$. The decoding algorithm proposed by Sarvepalli and Raussendorf gave a threshold of $7.8\%$.

\begin{figure}[htbp]
\centering
\includegraphics[scale=1]{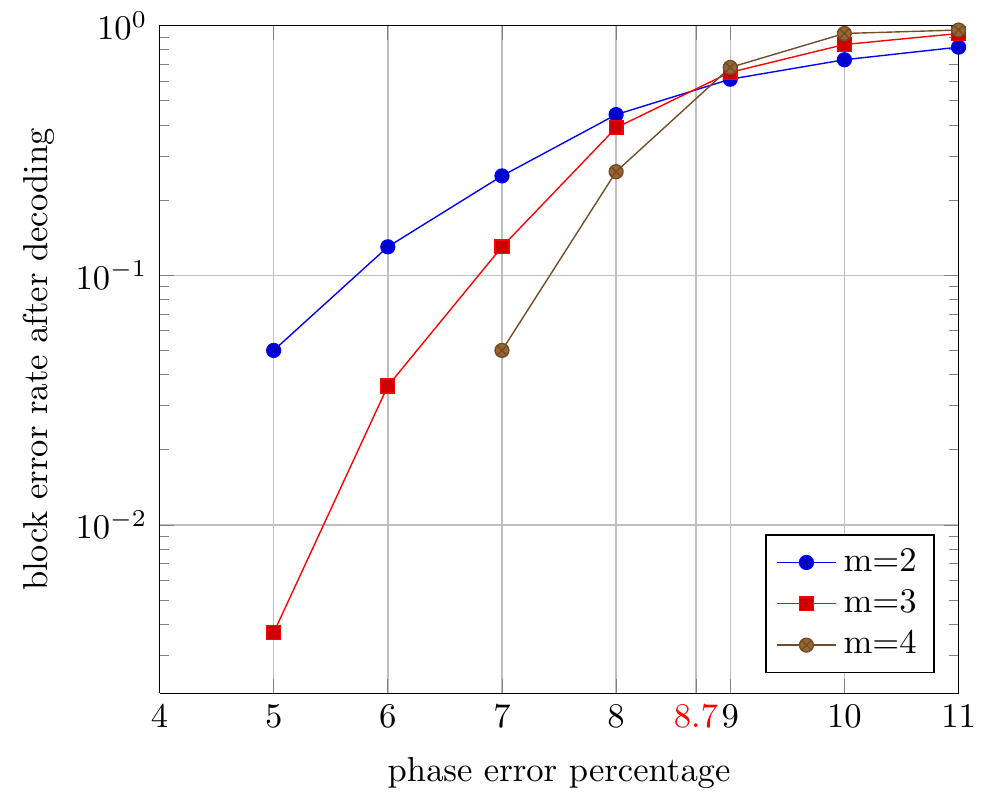}
  \caption{Bloc error rate after decoding as a function of the phase (or bit flip) error rate, for hexagonal color codes of parameters $[[18.4^m, 4, 4.2^m]]$.}
  \label{fig:courbe_color_hexagonal}
\end{figure}

\subsection{Comparison of the thresholds of color codes and surface codes}
\label{subsection:threshold}

The projection morphism, studied in Section~\ref{section:projection}, enables us to transfer results from surface codes to color codes. In this section, we compare the error threshold of surface codes and color codes.

In this paper, we consider independently the bit flip error $X$ and the phase error $Z$. Thus, we study the {\em phase error threshold} and the {\em bit flip error threshold}. The phase error threshold of a family of stabilizer codes $(Q_t)_t$ is defined to be the highest phase error probability, that can be tolerate with vanishing phase error probability after decoding when $t\rightarrow\infty$.

By symmetry, the phase error threshold of a family of color codes coincides with its bit flip error threshold. For a family of surface codes, these thresholds can be different. It is the case for example with the surface codes based on the triangular lattice and the hexagonal lattice. The phase error threshold is observed in the original graph and the bit flip error threshold is observed in the dual graph.

Let us consider a family of color codes, denoted by $(\mc C_t)_t$, and the three corresponding surface codes, denoted by $(\mc C_t({\bf c}))_t$, for ${\bf c} \in \{R, G, B\}$.
Assume that we use a surface decoding algorithm and the color decoding algorithm deduced from it using Algorithm~\ref{algo:couleur}.
Given these decoding algorithms, denote by $p_c$ the phase error threshold of the family of color codes $(\mc C_t)_t$ and denote by $p_c(\bf c)$ the phase error threshold of the family of surface codes $(\mc C_t(\bf c))_t$.

\begin{theo} \label{theo:comparaison_seuil}
The phase error threshold $p_c$ of a family of color codes is bounded below by a function of the phase error threshold $p_c(\bf c)$ of the three corresponding families of surface codes:
$$
p_c \geq \min_{{\bf c} \in \{R, V, B\}} \left\{ \frac 1 2\left( 1 - \sqrt{1-2p_c({\bf c})} \right) \right\}.
$$
\end{theo}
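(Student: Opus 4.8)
The plan is to show that, under the independent bit-flip model on the hyperedges of $\mc H$, the projected error $\pi_{\bf c}^1(x)$ is distributed on the surface code $G^*({\bf c})$ exactly as an independent bit-flip error of parameter $q = 2p(1-p)$, and then to invoke Theorem~\ref{theo:perf_algo2} together with the assumed thresholds $p_c({\bf c})$ of the three surface code families.

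The combinatorial heart of the argument is the following. Each hyperedge $e \in \mc E$, viewed as a triangular face of $G^*$, carries exactly one ${\bf c}$-colored edge, namely $\pi_{\bf c}^1(e)$ (by Definition~\ref{defi:projection} and Lemma~\ref{lemma:images}); conversely, since $G^*$ is a triangulation of a closed surface, every ${\bf c}$-colored edge $f$ of $G^*({\bf c})$ lies on the boundary of exactly two triangular faces of $G^*$, hence is the image under $\pi_{\bf c}^1$ of exactly two hyperedges, and these preimage pairs are pairwise disjoint as $f$ ranges over the edges of $G^*({\bf c})$. Consequently, writing $\pi_{\bf c}^1(x) = \sum_{e \in x} \pi_{\bf c}^1(e)$ over $\F_2$, the edge $f$ belongs to $\pi_{\bf c}^1(x)$ if and only if exactly one of the two hyperedges lying over $f$ belongs to $x$; since the pairs are disjoint, the corresponding events are independent, and each occurs with probability $2p(1-p)$. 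Thus $\pi_{\bf c}^1(x)$ is an independent bit-flip error of parameter $q=2p(1-p)$ on $G^*({\bf c})$ --- this is the conservation of the error model alluded to in the introduction.

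It remains to combine this with an elementary inequality. The map $p \mapsto 2p(1-p)$ is increasing on $[0,\tfrac12]$, and solving $2p(1-p) = p_c({\bf c})$ gives the smaller root $p = \tfrac12\bigl(1-\sqrt{1-2p_c({\bf c})}\bigr)$; hence, for $p \le \tfrac12$ (which is automatic here since each $p_c({\bf c}) \le \tfrac12$), the condition $p < \tfrac12\bigl(1-\sqrt{1-2p_c({\bf c})}\bigr)$ is equivalent to $2p(1-p) < p_c({\bf c})$. Therefore, if $p < \min_{\bf c} \tfrac12\bigl(1-\sqrt{1-2p_c({\bf c})}\bigr)$, then for each of the three colors the projected error $\pi_{\bf c}^1(x)$ is an independent bit-flip error whose parameter lies strictly below the threshold $p_c({\bf c})$ of the family $(\mc C_t({\bf c}))_t$, so the surface decoder corrects $\pi_{\bf c}^1(x)$ with probability tending to $1$. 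A union bound over the three colors shows that, with probability tending to $1$, all three projected errors are corrected simultaneously, and then Theorem~\ref{theo:perf_algo2} guarantees that Algorithm~\ref{algo:couleur} corrects $x$. Hence every $p$ strictly below $\min_{\bf c} \tfrac12\bigl(1-\sqrt{1-2p_c({\bf c})}\bigr)$ is tolerated by the color code family, which gives the claimed lower bound on $p_c$.

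The main obstacle is the exactness of the distributional claim: one must verify carefully that $\pi_{\bf c}^1$ is two-to-one onto the edge set of $G^*({\bf c})$ with disjoint preimage pairs, which relies crucially on $G^*$ being a genuine triangulation of a surface without boundary (so each edge borders exactly two faces) together with the fact that a triangular face of $G^*$ contains exactly one edge of each color. A secondary point is to be precise about the scaling of the three surface code families relative to the color code family, so that the asymptotic statement ``parameter below $p_c({\bf c})$ implies vanishing block error after decoding'' applies verbatim; this is built into the construction of the associated surface codes in Section~\ref{section:3_surfaces}.
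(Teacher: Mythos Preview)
Your proof is correct and follows the same route as the paper's: identify the induced error model on each surface code as an i.i.d.\ bit-flip channel of parameter $2p(1-p)$, then invoke Theorem~\ref{theo:perf_algo2} and solve $2p(1-p)=p_c({\bf c})$. You actually supply more detail than the paper, which justifies the distributional claim with the single remark that ``every qubit of the surface code corresponds to two qubits of the color code''; your explicit verification that $\pi_{\bf c}^1$ is two-to-one with disjoint fibres (via the triangulation structure of $G^*$) and the union bound over the three colors make the argument airtight.
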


\begin{proof}
Consider an error $x \in C_1(\mc H)$ for the color code and its projection $\pi_{\bf c}(x)$, which is an error in the surface code corresponding to the color $\bf c$.
When $x$ is a random error for a binary symmetric channel of probability $p$, then $\pi_{\bf c}(x)$ is a random error for a binary symmetric channel of probability $2p(1-p)$. To prove this, remark that every qubit of the surface code corresponds to two qubits of the color code. Thus, the probability to have an error on a qubit of the surface code is the probability to have an error on exactly one of the corresponding two qubits in the color code.

If the probability $2p(1-p)$ is below the threshold $p_c(\bf c)$ of the three surface codes, then we are able to correct the error $\pi_{\bf c}(x)$, in the three surface codes, with vanishing error probability, using the surface code decoding algorithm. From Theorem~\ref{theo:perf_algo2}, we can also correct the error $x$ acting on the color code with vanishing error probability, in this case.
This proves that $p$ is below the phase error threshold of the family of color codes $(C_t)_t$. To conclude the proof, it remains only to solve the equation $2p(1-p)=p_c(\bf c)$.
\end{proof}

In the special case of the hexagonal color codes, the three corresponding surface codes are based on an hexagonal tiling of the torus. The performance of the perfect matching algorithm over these surface codes was simulated in Figure~\ref{fig:perf_surface}, where we found a phase error threshold close to $15.9\%$.
The phase error threshold of hexagonal color codes obtained in Figure~\ref{fig:courbe_color_hexagonal} is approximately $8.7\%$.
These numerical results are in excellent agreement with Theorem~\ref{theo:comparaison_seuil}, indeed when $p_c({\bf c})\approx 0.159$, we have $\frac 1 2\left( 1 - \sqrt{1-2p_c({\bf c})} \right) \approx 0.0871$.

\section*{Conclusion}

\begin{itemize}

\item We proposed a new decoding algorithm for color codes. This algorithm is based on the projection of the error onto three surface codes. This strategy allows us to transform every decoding algorithm for surface codes, such as perfect matching decoding \cite{DKLP02}, or renormalization group decoding \cite{DP10} into a decoding algorithm for color codes.
Using the perfect matching decoding algorithm for surface codes, we found a threshold of $8.7\%$ for a family of hexagonal color codes. This value is higher than the threshold observed by Sarvepalli and Raussendorf for these codes \cite{SR12}.

\item
Our idea shares some common features with the method recently proposed by Bombin, Duclos-Cianci and Poulin, which is based on the decomposition of a color codes into multiple copies of toric codes \cite{BDP12}. The point of view developed here is more general because it can be applied to every kind of color code, not only those based on square tilings. Recall that the use of square tilings is a strong restriction, because it makes impossible the construction of positive rate topological codes with growing minimum distance \cite{BPT10, De13}.
For algorithmic considerations, we deal with only three surface codes whereas the algorithm of Bombin \emph{et al.} can involve a large number of copies of the toric code.

\item
From a theoretical point of view, this study of the decoding problem of color codes could be used to transfer results from surface codes to color codes. For example, using the fact that our algorithm conserves the error model, we connected the threshold of color codes and surface codes in Theorem~\ref{theo:comparaison_seuil}.

\item
One future direction is optimizing our decoding algorithm, which is not always able to estimate the most likely error. Indeed, when the boundary estimation $\tilde b$ is not a vector of $\im \partial_2^*$, Algorithm~\ref{algo:couleur} returns $NO LIFTING$.
One could examine these problematic configurations. Further improvement may also be possible by using the correlations between the three projections of the error on the surface codes.
Finally, we do not consider the correlation between phase errors and bit flip errors for the depolarizing channel in this paper. The study of these correlations is necessary to approach the optimal threshold of the depolarizing channel \cite{BAOKM12}.
\end{itemize}

\section*{Acknowledgments}

The author wishes to acknowledge Alain Couvreur, Gilles Zémor and Benjamin Smith for careful reading of this paper. This work was supported by the French ANR Defis program under contract ANR-08-EMER-003 (COCQ project), and the CNRS PEPS project TOCQ. Part of this work was done while the author was supported by the Délégation Générale pour l'Armement (DGA) and the Centre National de la Recherche Scientifique (CNRS) at Bordeaux University. He is now supported by the LIX-Qualcomm Postdoctoral fellowship at the \'Ecole Polytechnique.

\appendix
\makeatletter
\def\@seccntformat#1{Appendix~\csname the#1\endcsname:\quad}
\makeatother

\section{Lifting of a boundary in a tiling of surface}

The goal of this section is to describe an algorithm to determine a vector $\tilde x \in C_2(G^*)$, such that $\partial_2^*(\tilde x) = \tilde b$, given a vector $\tilde b \in C_1(G^*)$. In the original graph $G$, the vector $\tilde x$ correspond to a set $X$ of vertices and the vector $\tilde b$ corresponds to a set $B$ of edges. For $X \subset V$, we denote by $\partial(X)$ the set of edges leaving $X$, \emph{i.e.} the set of edges having exactly one end-point in $X$. This set $\partial(X)$ is called the boundary of $X$.
This leads to the following formulation of the problem: given a set of egdes $B$ of the graph $G$, determine a set $X$ of vertices of the graph $G$, whose boundary $\partial(X)$ is equal to $B$.

First, we must be able to distinguish the connected components of the graph $G$, limited by the edges of $B$. The subgraph of $G$, induced by the edges of $E \backslash B$, is denoted by $G_{B^c}$. We are interested in the connected components of the graph $G_{B^c}$.

\begin{defi}
Let $C_1, C_2, \dots C_\kappa$ be the connected components of the graph $G_{B^c}$. The \emph{components graph} of $G_{B^c}$ is the graph of vertex set $V=\{1, 2, \dots, \kappa\}$ whose edges are the pairs of vertices $\{i, j\}$, such that the subgraphs $C_i$ and $C_j$ of $G$ are linked by an edge of $B$.
\end{defi}


To find the desired set of vertices $X$, we choose some of the connected components of the graph $G_{B^c}$. We must be careful to not take two neighbors components. Otherwise, we do not get the edges of $B$ which separate these components in the boundary $\partial(X)$. The next proposition characterizes the sets $B$, that can be lifted into a satisfying set $X$, and gives a method to construct the lifting $X$.

\begin{prop} \label{prop:components_graph}
Let $\mc C_{B^c}$ be the components graph of $G_{B^c}$.
Then, there exists a set $X \subset V$ such that $\partial(X) = B$ if and only if the graph $G_{B^c}$ is a bipartite graph.
Moreover, if $G_{B^c}$ is a bipartite graph, denote by $V_1 \cup V_2$ the corresponding partition of the vertices. Then the sets $X \subset V$ such that $\partial(X) = B$ are the two sets
\begin{equation} \label{eqn:lifting}
X_i = \{ v \in V \ | \ v \in C_j \text{ and } j \in V_i \},
\end{equation}
for $i= 1$ and $i=2$.
\end{prop}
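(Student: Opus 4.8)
The plan is to establish the equivalence by unwinding what $\partial(X)=B$ means connected-component by connected-component. First I would observe the elementary fact that for any $X\subset V$, an edge $e=\{u,v\}$ of $G$ lies in $\partial(X)$ exactly when $u$ and $v$ are separated by $X$, i.e.\ one endpoint is in $X$ and the other is not. In particular, no edge of $E\setminus B$ can be separated by $X$ if we want $\partial(X)=B$; equivalently, $X$ must be a union of connected components of $G_{B^c}$, since each component is connected using only edges of $E\setminus B$ and hence is entirely inside $X$ or entirely outside $X$. So any candidate $X$ corresponds to a subset $S$ of the component set $\{1,\dots,\kappa\}$, with $X=\bigcup_{j\in S}C_j$.

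Next I would translate the condition ``$\partial(X)=B$'' into a condition on $S$ inside the components graph $\mc C_{B^c}$. An edge of $B$ joins two components $C_i$ and $C_j$ (possibly $i=j$, but the standing no-loop/no-multiple-edge assumptions, together with the hypothesis on the shortest non-boundary cycle, rule out the degenerate cases, so we may take $i\neq j$); such an edge lies in $\partial(X)$ iff exactly one of $i,j$ is in $S$. Since we need \emph{every} edge of $B$ to be in $\partial(X)$ and \emph{no} other edge to be in $\partial(X)$ (the ``no other'' part is automatic from the previous paragraph), the requirement is precisely: for every edge $\{i,j\}$ of $\mc C_{B^c}$, exactly one of $i,j$ lies in $S$. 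This says $S$ and its complement form a proper $2$-coloring of $\mc C_{B^c}$ in which adjacent vertices get different colors --- i.e.\ $(S,\{1,\dots,\kappa\}\setminus S)$ is a bipartition of $\mc C_{B^c}$.

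From here the equivalence is immediate. If such an $X$ (equivalently such an $S$) exists, then $\mc C_{B^c}$ admits a proper $2$-coloring with color classes $S$ and its complement, so it is bipartite; note this also forces $\mc C_{B^c}$ to have no edge within a component's "diagonal," but as remarked the degenerate configurations are excluded. Conversely, if $\mc C_{B^c}$ is bipartite with bipartition $V_1\cup V_2$, then taking $S=V_1$ (resp.\ $S=V_2$) gives $X_1$ (resp.\ $X_2$) as in Equation~(\ref{eqn:lifting}), and by the translation above $\partial(X_i)=B$. Finally I would argue these are the only two solutions: any valid $S$ is, restricted to each connected component of the graph $\mc C_{B^c}$, one of the two color classes, and a connectivity argument on $\mc C_{B^c}$ (or rather on each of its connected components) shows the choice is globally consistent only as $V_1$ or $V_2$; here one should be slightly careful to phrase this component-of-$\mc C_{B^c}$-by-component, since if $\mc C_{B^c}$ is disconnected there are more than two liftings --- so implicitly the statement is using connectivity of $G$ (hence of $G_{B^c}$ up to the cut $B$) to guarantee $\mc C_{B^c}$ is connected, or one restricts to the case where it is, which is the relevant case for a single boundary coming from $\partial_2^*$.

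The main obstacle I expect is not the logical skeleton, which is short, but handling the degenerate and disconnectedness subtleties cleanly: making sure loops in $\mc C_{B^c}$ (a $B$-edge between two vertices of the same component, which would make $\partial(X)=B$ impossible) are excluded by the graph hypotheses, and making precise the ``exactly two liftings'' claim, which genuinely requires $\mc C_{B^c}$ to be connected. Everything else reduces to the bookkeeping identity ``$e\in\partial(X)$ iff $e$ separates $X$,'' applied once over $E\setminus B$ and once over $B$.
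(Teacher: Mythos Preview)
Your proposal is correct and follows essentially the same approach as the paper: both arguments show that any $X$ with $\partial(X)=B$ must be a union of connected components of $G_{B^c}$, and then translate the condition $\partial(X)=B$ into the statement that the chosen components form one side of a bipartition of the components graph $\mc C_{B^c}$. Your write-up is in fact more careful than the paper's, which only sketches the converse and does not discuss the degenerate loop case or the connectivity of $\mc C_{B^c}$ needed for the ``exactly two liftings'' claim; your observation that connectivity of $G$ forces connectivity of $\mc C_{B^c}$ is the right way to close that gap.
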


\begin{proof}
Suppose that the set $B$ admits a preimage $X$ such that $\partial(X) = B$.
Let us prove that we can partition the vertices of the components graph $G_{B^c}$.
Let $C_i$ and $C_j$ be two connected components of the graph $G_{B^c}$, joined by an edge $e=\{u, v\}$ of $B$, \emph{i.e.} $u \in C_i$ and $v \in C_j$.
The set $X$ contains exaclty one of the end-points of $e$. Otherwise, the edge $e \in B$ does not belong to the set $\partial(X)$. Assume that $u \in X$ and $v \notin X$.
Since the set $X$ contains a vertex $u$ of the component $C_i$, it also contains all the neighbors of $x$, which are included in $C_i$. Step by step, this proves that all the vertices of this connected component are in the set $X$.
Moreover, there is no vertex of the connected component $C_j$ included in the set $X$. Indeed, if $X$ contains a  vertex of $C_j$, then it contains the whole component $C_j$, and we have $v \in X$. This contradicts the fact that $X$ contains exaclty one of the end-points of $e$.

Finally, we proved that the set $X$ is a union of connected components $C_i$, such that if we consider two neighbors connected components, it contains exactly one of these two components. This proves that the vector $X$ defines a partition of the vertices of the components graph corresponding to a bipartite structure.
The converse statement can be proved similarly
\end{proof}

\begin{algorithm}
\caption{Lifting of a boundary}
\label{algo:lifting}

\begin{algorithmic}[1]
\REQUIRE a graph $G = (V, E)$, a subset $B \subset E$.
\ENSURE a subset $X \subset V$ such that $\partial(X) = B$, or $NO LIFTING$ when such a set $X$ does not exist.

\STATE Compute the connected components of the graph $G_{B^c}$.

\STATE Define the components graph $\mc C_{B^C}$ of the graph $G_{B^c}$.

\STATE Compute a bipartite decomposition $V_1 \cup V_2$ of $\mc C_{B^c}$. If this graph is not bipartite return $NO LIFTING$.

\STATE Return $X_1 = \{ v \in V \ | \ v \in C_j \text{ and } j \in V_1 \}$.
\end{algorithmic}
\end{algorithm}

From this proposition, we able to determine if a set $B \subset E$ has a lifting $X \subset V$ such that $\partial(X) = B$, and we can compute efficiently this lifting $X$, when such a lifting exists. This result leads to Algorithm~\ref{algo:lifting}.
The complexity of this algorithm is quadratic due the determination of the connected components of a graph \cite{HT73}.

\bibliographystyle{acm}

\newcommand{\SortNoop}[1]{}

\end{document}